\documentclass[11pt]{article}
\usepackage{axient_ffra_compat}
\usepackage{algorithm,algpseudocode,placeins,url,xspace,seqsplit}
\hypersetup{colorlinks=true,linkcolor=black,citecolor=black,urlcolor=blue!55!black,
  pdftitle={Axient: Canonical Protocol-Graph Composition for Leveraged Event Markets: Single State Authority, Atomic Composition, Durable Sagas, and Exactly-Once Recovery},
  pdfauthor={Maksym Nechepurenko},
  pdfsubject={Protocol composition and financial-state authority for leveraged event markets},
  pdfkeywords={event markets, leveraged prediction markets, protocol composition, smart contracts, on-chain credit, settlement evidence, liquidation, loss waterfall, exactly-once recovery}}
\selectfont
\newcolumntype{P}[1]{>{\raggedright\arraybackslash}p{#1}}
\crefname{definition}{definition}{definitions}
\crefname{assumption}{assumption}{assumptions}
\crefname{proposition}{proposition}{propositions}
\crefname{theorem}{theorem}{theorems}
\crefname{corollary}{corollary}{corollaries}
\crefname{remark}{remark}{remarks}
\definecolor{axblue}{HTML}{4263EB}
\definecolor{axteal}{HTML}{14B8A6}
\definecolor{axnavy}{HTML}{0B172A}
\definecolor{axgray}{HTML}{EFF3F8}

\newcommand{\Hh}{\mathsf{H}}
\newcommand{\interimnote}{\begin{center}\fcolorbox{axteal!70!black}{axteal!6}{\parbox{0.93\textwidth}{\centering\textbf{Registered local evidence result.} One accepted cohort contains twelve uniquely correlated FIA records and 84 retained \texttt{ACTUAL\_EVIDENCE} layer positions. Two clean materializations matched at the registered byte/file-index boundary; a deterministic strict archive was accepted by a separate verifier; and a selected ordered replay is bound to that archive.}}\end{center}}

\title{\textbf{Axient: Canonical Protocol-Graph Composition for Leveraged Event Markets: Single State Authority, Atomic Composition, Durable Sagas, and Exactly-Once Recovery}}
\author{Maksym Nechepurenko\\\small Research Department of Devnull FZCO, Dubai, UAE\\\small \href{mailto:maksym@devnull.ae}{maksym@devnull.ae} \;|\; \href{https://orcid.org/0000-0002-9515-8841}{ORCID: 0000-0002-9515-8841}}
\date{30 September 2026}
\ffraSetup{series={Axient Research \textperiodcentered\ Paper 6},author={Maksym Nechepurenko},affiliation={Research Department of Devnull FZCO, Dubai, UAE},email={maksym@devnull.ae},orcid={0000-0002-9515-8841},version={r0.6.3},date={30 September 2026}}

\begin{document}
\maketitle

\begin{abstract}
A modular leveraged event-market protocol can contain individually correct contracts for risk approval, positions, debt, settlement evidence, credit pools, junior backstops, reserves, liquidation, and governance while still lacking one authoritative financial execution path. This paper develops a canonical protocol graph in which every financial domain has one storage authority and composition components coordinate transitions without recreating balances, debt, positions, evidence receipts, queues, reserves, pause state, or terminal scenario state. The architecture formalizes atomic position origination, a settlement-confirmed debt saga, trader residual release, an ordered reserve--backstop--Senior loss waterfall, loss-participating withdrawal queues, capability-conditioned execution, protocol-wide pause and timelocked recovery, deterministic deployment manifests, canonical state projection, and exactly-once reconstruction after a chain-confirmed checkpoint. We prove composition-level authority uniqueness, atomic closure, loss conservation and priority, receipt-idempotent settlement, legal no-go safety, capability-transition safety, and no-financial-reexecution recovery under explicit assumptions. The implementation is evaluated through twelve parent Financial Interaction Assertions (FIA-001--FIA-012), including mandatory FIA-011A/B capability subcases. The registered local cohort contains twelve uniquely correlated scenario records and 84 retained \texttt{ACTUAL\_EVIDENCE} layer positions, two clean materializations with one byte-identical file index, a deterministic strict archive accepted by a separate verifier, and a selected ordered replay bound to that archive. The result connects the formal protocol graph to a reproducible evidence chain while preserving explicit boundaries between the observed engineering result and future external-operation hypotheses.
\end{abstract}

\noindent\textbf{Keywords:} event markets; protocol composition; smart contracts; on-chain credit; single source of truth; atomicity; durable sagas; exactly-once recovery; loss waterfall; canonical state.

\section{Introduction}
Leveraged event-market positions combine a time-bounded trading problem with a credit problem. A trader supplies collateral, a credit provider finances additional outcome exposure, an execution venue determines what can be bought or sold, and an oracle or verification gateway later fixes the event payout. Earlier Axient work separates leverage maturity from claim maturity and develops debt-free finality: leverage is extinguished before the claim becomes non-tradable, leaving a fully funded residual claim for later resolution \citep{nechepurenko2026debtfree}. The companion credit paper introduces Senior Credit LPs, Liquidation Backstop Providers (LBPs), reserves, queues, and a deterministic loss waterfall \citep{nechepurenko2026credit}. The empirical programme then distinguishes formal certificate bounds from statistical validation and defines a controlled internal-alpha methodology \citep{nechepurenko2026empirical}.

Those results do not, by themselves, answer an implementation question: when the system is decomposed into many contracts and services, which object is the financial source of truth? A test-only bridge can reproduce a desired terminal state while maintaining its own queue, pause flag, evidence registry, or scenario state. A reference waterfall can be correct algebraically yet become a second capital ledger if it is connected alongside real vault balances. An application service can expose plausible projections without proving that the intended canonical selectors were called. The result may be internally consistent but architecturally ambiguous.

This paper studies the composition problem. Its central requirement is:
\begin{quote}
\emph{Every financial domain has exactly one canonical storage authority. Composition code may coordinate canonical modules, but it may not own a second copy of balances, debt, positions, evidence receipts, queues, reserves, pause state, or terminal financial state.}
\end{quote}

The requirement is stronger than modularity. Modularity says that responsibilities are separated; canonical composition says that their states and transitions jointly form one financial machine. It also differs from a generic orchestrator. The orchestrator may schedule, correlate, retry, and project, but it cannot settle debt by rewriting its own database. The authority boundary is on-chain and manifest-bound.

\paragraph{Contributions.}
\begin{enumerate}[label=(C\arabic*)]
\item A composition-level storage-authority rule that applies to every nonempty subset of financial domains, not only to each component in isolation.
\item Storage-free composition controllers for position origination, ordered loss allocation, and protocol-wide emergency pause.
\item Capability-conditioned execution authority, including stale-capability rejection before admission and fail-closed book-epoch transition handling while debt remains positive.
\item An authority context binding each financial mutation to a legal-entity role, custody class, and permitted-use class.
\item A distinction between atomic EVM transitions and durable, idempotent service sagas.
\item A canonical settlement lifecycle in which typed evidence is the idempotency boundary and debt reduction precedes residual or collateral release.
\item A single executable loss waterfall over reserve, LBP, and Senior modules without a duplicate capital ledger.
\item A deterministic deployment manifest that binds contract addresses, code identities, ABIs, storage layouts, roles, selectors, pause targets, adapter identity, and application schemas.
\item A canonical state projection derived from ABI views and ordered canonical events rather than fixture values.
\item An effect-level recovery criterion based on a chain-confirmed checkpoint and replay without a duplicate committed financial effect.
\item A twelve-scenario FIA evaluation surface and a strict separation between the completed cohort-bound evidence result and successor external-operation hypotheses.
\end{enumerate}

The architecture overview, ordered waterfall, and emergency-control topology are shown in \cref{fig:graph,fig:waterfall,fig:pause}.

\section{System model and evidence scope}
\subsection{Financial domains}
Let $\mathcal D$ be the finite set of financial domains and $\mathcal M$ the set of deployed modules. The relevant domains are
\[
\mathcal D=\{\textsf{risk},\textsf{position},\textsf{debt},\textsf{evidence},\textsf{senior},\textsf{lbp},\textsf{reserve},\textsf{liquidation},\textsf{governance}\}.
\]
The global financial state is the product
\[
X=\prod_{d\in\mathcal D}X_d,
\]
where $X_d$ is the state space of domain $d$.

\begin{definition}[Canonical authority map]
A canonical authority map is a function
\[
A:\mathcal D\rightarrow\mathcal M
\]
that assigns exactly one storage-owning module to each financial domain.
\end{definition}

\begin{definition}[Cross-cutting pause state]
Let $\mathcal T$ be the nonempty manifest-frozen set of pausable modules. Its authoritative pause state is the indexed vector $P_{\mathcal T}=(p_m)_{m\in\mathcal T}$, where each component is owned by its target module. The statement that all targets are paused is distinct from the statement that at least one is paused. An action is governed by its registered target and selector policy, not by an unqualified aggregate flag. Missing or stale required state does not establish permission or global safety. Pause is a cross-cutting control over financial domains, not another financial domain with one storage owner. The coordinator may fan out commands but owns no component of $P_{\mathcal T}$.
\end{definition}

\begin{definition}[Authority context]
An authority context is a tuple
\[
\lambda=(e,c,p),
\]
where $e$ is the legally responsible entity class, $c$ the custody class governing the asset or claim, and $p$ the permitted-use class. A module address is technically reachable but not financially authorized unless the manifest binds the relevant domain to an admissible authority context $\lambda$.
\end{definition}

\begin{definition}[Venue capability state]
At time $t$, the venue capability state is
\[
\kappa_t=(a_t,b_t,\rho_t),
\]
where $a_t$ is the versioned capability vector, $b_t$ the book epoch, and $\rho_t$ the pause/reactivation mode. Capability continuity for a command means that every capability used by its certificate remains valid over the registered execution interval, or that a registered risk-reducing fallback is triggered when continuity fails.
\end{definition}

The registered local composition uses the financial-domain mapping in \cref{tab:authority}. Pause is evaluated separately through the manifest-frozen indexed vector in the cross-cutting pause predicate.

\begin{table}[t]
\centering\small
\caption{Canonical storage-authority map.}
\label{tab:authority}
\begin{tabularx}{\textwidth}{@{}P{0.31\textwidth}Y@{}}
\toprule
Domain & Sole canonical authority \\
\midrule
Risk approvals and $2\times$ limits & \texttt{RiskManager} \\
Position lifecycle and locked collateral & \texttt{PositionManagerCore} \\
Principal, accrued interest, matched/settled capacity & \texttt{DebtManagerCore} \\
Evidence, replay, conflict, recovery & \texttt{SettlementEvidenceRegistry} \\
Senior assets, shares, losses, withdrawal queue & \texttt{CreditPool} \\
LBP assets, shares, encumbrance, impairment, queue & \texttt{LBPVault} \\
Segregated reserve custody & \texttt{ReserveVault} \\
Liquidation request and execution evidence & \texttt{LiquidationManager} \\
Governance operations & \texttt{GovernanceTimelock} \\
\bottomrule
\end{tabularx}
\end{table}

The entity, custody, and permitted-use classes associated with the map are summarized in \cref{tab:authority-context}.

\begin{table}[t]
\centering\small
\caption{Authority context bound to canonical domains. Exact entity names are deployment-specific; the classes are part of the manifest identity.}
\label{tab:authority-context}
\begin{tabularx}{\textwidth}{@{}P{0.23\textwidth}P{0.22\textwidth}P{0.23\textwidth}Y@{}}
\toprule
Domain class & Entity role & Custody class & Permitted-use condition \\
\midrule
Position/debt & protocol or licensed/intermediated operator & collateral-locked or contract-controlled & leverage origination and debt service permitted \\
Venue execution & venue-authorized trader, delegate, or clearing agent & venue account or approved contract account & order entry, cancellation, and settlement permitted \\
Claim custody & claim custodian or contract-controlled outcome account & segregated claim custody & financed claim may be held, redeemed, and subject to lien/control \\
Credit/LBP/reserve & credit pool, backstop vehicle, or reserve owner & segregated protocol capital & loss absorption and withdrawal policy permitted \\
Liquidation & authorized liquidator/default manager & execution-only authority & registered risk-reducing transition permitted \\
\bottomrule
\end{tabularx}
\end{table}

A displayed balance or economically available unit is not automatically eligible capital. If the entity, custody, or permitted-use gate fails, the command receives a typed \texttt{LEGAL\_NO\_GO} result and produces zero financial mutation.

\subsection{Controllers and services}
A controller coordinates calls among canonical modules. A service performs durable orchestration, evidence ingestion, indexing, and presentation. Neither is permitted to become financial authority.

\begin{definition}[Storage-free composition controller]
A controller $C$ is storage-free with respect to financial state if its mutable storage contains no balance, share supply, debt, position, evidence receipt, withdrawal queue, reserve account, pause flag, or terminal financial state. Immutable module addresses and non-authoritative correlation events are allowed.
\end{definition}

\begin{definition}[Durable financial saga]
A durable financial saga is a sequence of manifest-authorized canonical commands whose progress is recorded in an append-only inbox/outbox/journal. Retries read canonical state and may repeat idempotent commands, but no service-owned balance or debt state determines financial completion.
\end{definition}

\subsection{Evidence boundary}
The implementation programme is intentionally local-only: Anvil chain ID 31337, loopback services, emulated execution, test assets, and a hard $2\times$ leverage ceiling. No production network, public testnet, remote RPC, real assets, external venue writes, public liquidity, live credit, DNS, or Vercel production promotion belongs to the evaluated scope. The evidence therefore concerns canonical composition under controlled local conditions, not production readiness. The four principal composition boundaries are summarized in \cref{fig:graph}.

\begin{figure}[t]
\centering
\resizebox{0.98\textwidth}{!}{%
\begin{tikzpicture}[
  mod/.style={draw=axblue!70,fill=axgray,rounded corners=2pt,minimum width=2.7cm,minimum height=0.68cm,align=center,font=\small},
  ctl/.style={draw=axteal!80!black,fill=axteal!7,rounded corners=2pt,minimum width=3.25cm,minimum height=0.68cm,align=center,font=\small},
  svc/.style={draw=axnavy!70,fill=axnavy!6,rounded corners=2pt,minimum width=3.25cm,minimum height=0.68cm,align=center,font=\small},
  arr/.style={-{Latex[length=2.4mm]},thick,draw=axnavy!80}]
\node[ctl] (orig) at (0,0) {Position Origination\\Controller};
\node[mod] (risk) at (4,0) {RiskManager};
\node[mod] (pos1) at (7.6,0) {PositionManagerCore};
\node[mod] (debt1) at (11.2,0) {DebtManagerCore};
\draw[arr] (orig)--(risk); \draw[arr] (risk)--(pos1); \draw[arr] (pos1)--(debt1);
\node[svc] (saga) at (0,-1.35) {Settlement Saga\\+ Journal / Projection};
\node[mod] (evid) at (4,-1.35) {SettlementEvidence\\Registry};
\node[mod] (debt2) at (7.6,-1.35) {DebtManagerCore};
\node[mod] (pos2) at (11.2,-1.35) {PositionManagerCore};
\draw[arr] (saga)--(evid); \draw[arr] (evid)--(debt2); \draw[arr] (debt2)--(pos2);
\node[ctl] (wf) at (0,-2.70) {Loss Waterfall\\Controller};
\node[mod] (res) at (4,-2.70) {ReserveVault};
\node[mod] (lbp) at (7.1,-2.70) {LBPVault};
\node[mod] (credit) at (10.2,-2.70) {CreditPool};
\node[mod] (liq) at (13.5,-2.70) {LiquidationManager};
\draw[arr] (wf)--(res); \draw[arr] (res)--(lbp); \draw[arr] (lbp)--(credit); \draw[arr] (credit)--(liq);
\node[ctl] (pause) at (0,-4.05) {Protocol Pause\\Coordinator};
\node[mod,minimum width=4.4cm] (targets) at (4.6,-4.05) {Manifest-frozen\\8-module target set};
\node[mod] (gov) at (8.9,-4.05) {GovernanceTimelock};
\draw[arr] (pause)--(targets); \draw[arr] (targets)--(gov);
\end{tikzpicture}}
\caption{Canonical protocol graph. Each row shows one composition boundary; financial state remains in the domain modules and the controllers do not become a second ledger.}
\label{fig:graph}
\end{figure}

\section{Canonical transition semantics and command admissibility}
\subsection{Global transition system}
Let $x\in X$ be canonical financial state, $m$ a sealed deployment manifest, $r$ a role, and $u$ a typed command. A command transition is a partial function
\[
T_u:X\times M\times R\rightharpoonup X\times O,
\]
where $O$ is the set of receipts and canonical events. Capability, legal-authority, and target-owned pause context are fixed evaluation parameters, suppressed in this notation; they are not inferred from service projections. The function is undefined when any required authorization or precondition fails.

The command is \emph{financially admissible} only if six predicates hold:
\begin{align}
G_M(u,m)&=1 && \text{manifest address and selector authorization},\\
G_R(u,r)&=1 && \text{role and governance authorization},\\
G_P(u,x)&=1 && \text{protocol preconditions in canonical state},\\
G_K(u,x,\kappa)&=1 && \text{venue-capability freshness and continuity},\\
G_L(u,x,\lambda)&=1 && \text{entity, custody, and permitted-use eligibility},\\
G_B(u,x)&=1 && \text{economic and environment boundaries}.
\end{align}
The execution predicate is
\[
G(u,x,m,r,\kappa,\lambda)=G_MG_RG_PG_KG_LG_B.
\]
A failed predicate produces either an EVM revert or a typed service rejection whose evidence explicitly proves zero financial mutation. Economic availability cannot compensate for $G_L=0$, and a previously valid certificate cannot compensate for a capability or book-epoch transition that makes $G_K=0$.

\begin{definition}[Canonical command]
A canonical command is a typed command whose target address and selector are manifest-authorized, whose role is permitted, whose venue-capability and legal-authority contexts are valid for the required interval, and whose financial effect is determined exclusively by the canonical storage authorities.
\end{definition}

\begin{proposition}[No off-manifest equivalence]
Two commands with the same user-facing label but different target addresses, selectors, role policy, or deployment-manifest identities are different canonical commands. Evidence for one does not prove the other.
\end{proposition}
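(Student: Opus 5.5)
The argument is definitional: a canonical command is fixed by its typed identity, not by its label. I would make that identity explicit as the tuple
\[
\iota(u)=\bigl(\mathrm{addr}(u),\mathrm{sel}(u),\pi_R(u),\mathrm{id}(m)\bigr),
\]
where $\pi_R$ is the role policy and $\mathrm{id}(m)$ the sealed manifest identity. The first step is to read off from the definition of a canonical command and the transition semantics that the induced partial map $T_u:X\times M\times R\rightharpoonup X\times O$ depends on $\iota(u)$ and on no user-facing label. In particular, $G_M$ is evaluated against the pair (address, selector) in $m$, and $G_R$ against the role policy. The financial effect is determined by the canonical authority that $A$ assigns through the manifest binding of that address. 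Two commands $u,u'$ with the same label but $\iota(u)\neq\iota(u')$ therefore have potentially different admissibility predicates $G$ and potentially different effect maps. For this reason they are different elements of the command space, which is the first sentence of the statement.

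For the second sentence, I would formalize ``evidence for $u$'' as an observed tuple consisting of the receipt and canonical events $o\in O$, the executing address and selector, and the manifest identity under which $G$ held. I would then argue case by case on the component of $\iota$ that differs.
\begin{enumerate}
\item \textbf{Different address.} The events are emitted by a different module. Under the authority map $A$, at most one of the two addresses is the storage owner of the relevant domain. The other may be a test bridge with its own ledger, which the storage-free and authority definitions explicitly exclude from financial authority.
\item \textbf{Different selector.} $G_M$ authorizes a different entry point, so the state transition that was witnessed is a different one.
\item \textbf{Different role policy.} $G_R(u,r)=1$ does not imply $G_R(u',r)=1$.
\item \textbf{Different manifest identity.} The binding of domains to modules and to authority contexts $\lambda$ may differ. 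A satisfied $G_L$ or $G_K$ under $m$ therefore says nothing about $m'$.
\end{enumerate}
In each case I would exhibit a counterexample pair with a common label. In the first, $u$ is admissible and produces effect $o$, while $u'$ is either inadmissible (so $T_{u'}$ is undefined) or mutates a non-canonical store. This shows that no sound inference rule can carry evidence of $u$ over to $u'$.

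The main obstacle is making ``evidence proves'' precise enough that the non-transfer claim is a theorem rather than a slogan. I would define it as: evidence $E$ proves $u$ iff $E$ is bound to $\iota(u)$ and $E$ certifies $G(u,\cdot)=1$ together with the effect $T_u(x,m,r)$. Non-transfer then reduces to showing that $\iota$ is injective on what evidence binds, and that the effect maps genuinely can differ. The counterexamples above supply the second part. The first part holds by construction, since the deployment manifest binds addresses, selectors, and roles.
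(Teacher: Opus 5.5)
Your proposal is correct and takes essentially the paper's route. The paper gives no formal proof: it treats the proposition as a definitional consequence of typing canonical commands by manifest-authorized address, selector and role. It then justifies non-transfer with the example your address case uses, namely a mock or legacy bridge exposing an ``open position'' method with different state effects and authority; your tuple $\iota(u)$ and identity-bound notion of evidence make that reading explicit.

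Two small points to tighten. First, distinctness in the first sentence follows directly from $\iota$ being part of the command's identity, not from the effect maps being ``potentially different.'' Second, your claim that at most one of the two addresses owns the domain holds only under a fixed manifest.
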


This distinction is important for upgrades and testing. A mock, a legacy bridge, and the canonical position manager may all expose a method called ``open position,'' but their state effects and authority differ. The manifest prevents interface similarity from being treated as execution equivalence.

\begin{proposition}[Legal no-go safety]
If $G_L=0$ and the command dispatcher and canonical modules enforce the manifest-bound legal context, then displayed collateral or credit capacity cannot originate a position, increase debt, or enlarge risk. The rejected command contributes zero financial mutation. Equality of the complete pre- and post-state additionally presumes an isolated execution without unrelated concurrent transitions.
\end{proposition}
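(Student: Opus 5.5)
The argument is a direct unfolding of the admissibility semantics of the global transition system, with one case split on where the failure of $G_L$ is detected. By definition, $T_u$ is a partial function that is undefined whenever a required authorization fails. The execution predicate $G=G_MG_RG_PG_KG_LG_B$ is a product, so $G_L=0$ forces $G=0$ no matter what the other factors are. In particular, the values of $G_P$ and $G_B$, where displayed collateral, credit capacity or economic availability enter, cannot lift $G$ back to $1$. This is the formal content of ``economic availability cannot compensate for $G_L=0$.'' The first step is therefore to record that, for every $u$ whose effect would originate a position, increase debt, or enlarge risk, $G_L=0$ leaves $T_u$ undefined.

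The second step turns ``undefined'' into ``zero financial mutation'' using the two rejection modes named in the text.

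\emph{Case (i): on-chain enforcement.} A canonical module checks the manifest-bound context $\lambda$ and reverts. EVM atomicity then discards every storage write of the transaction. The origination path is controlled by the storage-free controller, which by definition owns no financial storage. So no component $X_d$ changes, for any $d$ among $\textsf{position}$, $\textsf{debt}$ and $\textsf{risk}$.

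\emph{Case (ii): dispatcher rejection.} The service rejects the command before submission and emits a typed \texttt{LEGAL\_NO\_GO}. No canonical selector is called. By the canonical authority map $A$, only the module $A(d)$ can write $X_d$. Service journals are not financial state under the durable-saga definition, so no financial mutation occurs here either.

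In both cases we must also rule out a bypass: some other reachable address performing the same effect. The No off-manifest equivalence proposition handles this. A mock or legacy path with the same label is a different command, and it is not an authority for any domain. Hence it cannot change canonical $X$.

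The third step is the final clause. What we actually have is a statement about the contribution of $u$: the transition attributable to $u$ is the identity on $X$. Equality of the full pre-state and post-state then follows only if no other admissible transition interleaves. Under the isolation hypothesis this is immediate. Without it, we claim only that the post-state equals the result of applying the other transitions alone.

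The main obstacle is the bypass and partial-write argument. We need to show that no intermediate call inside an origination sequence can commit a write to one domain before the legal check fails in a later module. I would first try to require that the legal check precede any write, using the controller's call order $\textsf{risk}\to\textsf{position}\to\textsf{debt}$. Failing that, I would rely on single-transaction atomicity of the whole controller call, so that a later revert unwinds the earlier writes.
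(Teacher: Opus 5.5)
Your proposal is correct and follows the paper's own justification, which is given only implicitly in the text around the execution predicate rather than as a written proof. The shared argument is that $G=G_MG_RG_PG_KG_LG_B$, so economic availability entering $G_P$ or $G_B$ cannot offset $G_L=0$; the failed gate then yields either an EVM revert or a typed \texttt{LEGAL\_NO\_GO} service rejection with zero financial mutation, and full pre/post equality is a separate isolation presumption. Two points to tighten. Your fallback works only under the root-revert condition of the atomic origination theorem (no caught child revert or unchecked low-level return); only then does a later revert unwind an earlier \texttt{RiskManager.consumeApproval}. The bypass case is closed by the hypothesis that the sole writers $A(d)$ themselves enforce $\lambda$, not by the no-off-manifest-equivalence proposition.
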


\begin{proposition}[Capability-transition safety]
Suppose admission requires fresh capability state $\kappa_{t_0}$ and the active position remains bound to book epoch $b_{t_0}$. If capability is stale before admission, the command is rejected with zero mutation. If $b_t\neq b_{t_0}$ while debt $D_t>0$, risk-increasing authority is revoked and the system must enter a registered fail-closed or risk-reducing path before any further exposure increase.
\end{proposition}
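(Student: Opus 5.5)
The proof splits into two cases according to the capability event. The main tools are the multiplicative execution predicate $G=G_MG_RG_PG_KG_LG_B$, the partiality of $T_u$, and the definition of capability continuity attached to $\kappa_t=(a_t,b_t,\rho_t)$.

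\emph{Case 1: stale capability before admission.} Admission is the evaluation of $G$ at the admission state. I would argue that staleness of $\kappa_{t_0}$ (some component of $a_{t_0}$ outside its registered validity, or $\rho_{t_0}$ not in an admitting mode) means $G_K(u,x,\kappa)=0$ by the definition of that predicate. Then $G=0$ regardless of the other factors. This includes any previously issued certificate, since the statement of $G_K$ explicitly denies that a certificate can compensate.

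By the semantics of the transition system, $T_u$ is then undefined at $(x,m,r)$. The outcome is either an EVM revert, which rolls back every storage write in the transaction atomically, or a typed service rejection that by definition carries evidence of zero financial mutation. In both subcases, no canonical authority in \cref{tab:authority} records a changed component of $X$ attributable to $u$. I would state the same caveat as in the legal no-go proposition: equality of the full pre- and post-state holds only under isolated execution.

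\emph{Case 2: epoch change with positive debt.} Here $b_t\neq b_{t_0}$ and $D_t>0$. The position's certificate was bound to $b_{t_0}$, so the capability it relied on is no longer valid over the registered execution interval. Capability continuity therefore fails. I would then show that for every subsequent risk-increasing command $u'$ touching this position, $G_K(u',x_t,\kappa_t)=0$. By the argument of Case 1, every such command is rejected with zero mutation, which is exactly what "risk-increasing authority is revoked" means.

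Because $D_t>0$, the position cannot simply be left in a neutral state. By the definition of continuity, the only alternative to valid capability is that "a registered risk-reducing fallback is triggered." The admissible successors of $x_t$ for this position are therefore restricted to the registered fail-closed or risk-reducing commands. I would then argue by induction over the command sequence after $t$ that no exposure-increasing transition commits before such a path is entered. At each step the only commands with $G=1$ are non-increasing in exposure, and rejected commands contribute nothing.

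\emph{Main obstacle.} The hardest step is the inductive claim in Case 2. It needs an explicit partition of the selector policy into risk-increasing and risk-reducing commands. It also needs the assumption that $G_K$ is evaluated against the current $\kappa_t$ at each admission, not cached from $t_0$. My plan is to take this partition from the manifest's registered selector policy, treated as part of the hypothesis that dispatcher and modules enforce the manifest, just as the legal no-go proposition does. I would also note that the guarantee is conditional on this enforcement and is not a property of the economics.
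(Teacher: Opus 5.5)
The paper gives no separate proof of this proposition; it reads it directly off the $G_K$ factor of $G=G_MG_RG_PG_KG_LG_B$, the rule that a failed predicate yields a revert or typed rejection with zero financial mutation, and the clause that a previously valid certificate cannot compensate for a capability or book-epoch transition making $G_K=0$, so your two cases unpack exactly that reasoning and the proposal is correct. One tightening in Case~2: because capability continuity is defined disjunctively (validity over the interval \emph{or} a triggered registered risk-reducing fallback), epoch invalidity alone does not make continuity ``fail'', so what actually gives $G_K(u')=0$ for a risk-increasing $u'$ is that the fallback disjunct licenses only the risk-reducing fallback itself and never a risk-increasing selector---an instance of the enforcement hypothesis you already flag.
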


\subsection{Financial state and projection state}
Let $x$ denote canonical state and $y$ denote service/application projection state. The system evolves as
\[
x_{t+1}=T_u(x_t),\qquad y_{t+1}=P(J(x_{0:t+1})),
\]
where $J$ is the append-only journal of canonical observations and $P$ is a deterministic materializer. There is deliberately no transition of the form $x_{t+1}=F(y_t)$ for ordinary projection state. A service can propose a command based on $y_t$, but the contract preconditions and canonical state decide whether it commits.

\begin{proposition}[Projection rebuildability]
If the journal retains the manifest identity, command correlation, transaction and event identifiers, and all inputs required by $P$, then deleting $y$ does not destroy financial state and $y$ can be rebuilt without a financial transaction.
\end{proposition}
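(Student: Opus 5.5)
The proof has two parts: financial state survives deletion of $y$, and $y$ is recomputed from $J$ by a deterministic function. For the first, I will use the transition semantics $x_{t+1}=T_u(x_t)$ together with the explicit exclusion of any transition of the form $x_{t+1}=F(y_t)$. By the canonical authority map $A:\mathcal D\to\mathcal M$ and \cref{tab:authority}, every component $X_d$ of $X=\prod_d X_d$ is stored in an on-chain module $A(d)$. No service appears in the range of $A$, and storage-free controllers own no component either. Because $y$ is held only by the service and application layer, deleting it changes no coordinate of $x$. Any later command is evaluated through $G_M G_R G_P G_K G_L G_B$, and each of these predicates reads canonical state, the manifest, role, $\kappa$ and $\lambda$, never $y$. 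So deleting $y$ leaves both $x$ and the set of admissible future transitions unchanged. I will state this as ``$x$ is invariant under any mutation of $y$.''

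For rebuildability, the definition gives $y_{t+1}=P(J(x_{0:t+1}))$ with $P$ deterministic. The hypothesis says the retained journal contains every input that $P$ reads: the manifest identity, command correlation, transaction and event identifiers, and the remaining inputs. So I will define the rebuilt projection $\hat y := P(J)$, computed from the retained journal alone. Determinism gives $\hat y = y$. The only remaining point is that computing $P(J)$ requires no financial transaction. Evaluating $P$ is a pure function over journal records, possibly together with read-only ABI views. Read-only views are not commands $u$, so no $T_u$ is invoked and no receipt in $O$ is produced. Hence zero financial mutation occurs during the rebuild.

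The main obstacle is keeping $P$ from becoming an impure dependency. If rebuilding needed to re-query chain state that has since changed, $P(J)$ would no longer reproduce the historical $y$. I will handle this by reading the hypothesis ``all inputs required by $P$'' as closure: everything $P$ consumes is either in $J$ or is a manifest-bound immutable identity recorded in $J$. Ordered canonical events keyed by transaction and event identifiers then fix the input sequence. I will also note that the append-only journal is assumed to survive the deletion, since only $y$ is deleted and $J$ is a separate object. With that interpretation, the argument is a direct two-step composition and needs no appeal to the recovery or waterfall results.
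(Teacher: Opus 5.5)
Your proposal is correct and follows the reasoning the paper relies on. The paper gives no separate proof and treats the proposition as immediate from $x_{t+1}=T_u(x_t)$, from $y_{t+1}=P(J(x_{0:t+1}))$ with $P$ deterministic, and from the deliberate absence of any transition $x_{t+1}=F(y_t)$; this is exactly your split into ``deleting $y$ leaves $x$ untouched'' and ``$\hat y=P(J)$ is a pure recomputation that invokes no $T_u$'', while your extra appeals to the authority map and to the admissibility predicates never reading $y$ are consistent with the paper but not needed, and $\hat y=y$ should be read at the journal prefix from which $y$ was materialized, which append-only retention supplies.
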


\subsection{Module interface map}
\Cref{tab:interfaces} records the intended local composition surface. The public evidence companion records the FIA taxonomy, canonical result identities, and publication-safe registries used by the evaluated path; this article supplies the interface detail needed to establish the composition properties.

\begin{longtable}{@{}P{0.23\textwidth}P{0.24\textwidth}P{0.23\textwidth}P{0.22\textwidth}@{}}
\caption{Canonical module interface and authority map.}\label{tab:interfaces}\\
\toprule
Module & Principal mutations & Canonical events / observations & Composition role \\
\midrule
\endfirsthead
\toprule
Module & Principal mutations & Canonical events / observations & Composition role \\
\midrule
\endhead
RiskManager & approve, consume, release risk capacity & approval, consumption, rejection reason, limits & sole admission authority \\
Position Manager Core & open, transition, release collateral & position opened, state changed, collateral released & position/collateral authority \\
Debt Manager Core & create debt, record match, apply settlement, residual accounting & debt created, match recorded, settlement applied & debt authority \\
Settlement Evidence Registry & configure adapter, submit, classify replay/conflict, record recovery & admitted, replay, exception, recovery & evidence authority \\
CreditPool & Senior deposit, queue/service withdrawal, apply approved loss & shares, queue, service, loss events & Senior capital authority \\
LBPVault & deposit, encumber, impair, queue/service & shares, encumbrance, impairment, queue & junior capital authority \\
ReserveVault & fund and allocate approved loss & funding and reserve allocation & reserve custody authority \\
Liquidation Manager & request, record execution, record allocation hash & request, execution, and allocation evidence & liquidation evidence authority \\
Governance Timelock & queue and execute authorized operations & operation queued/executed & governed recovery authority \\
Position Origination Controller & consume approval, open position, create debt atomically & correlation metadata only & zero-storage coordinator \\
Loss Waterfall Controller & call canonical loss selectors atomically & allocation amounts and digest & zero-storage coordinator \\
Protocol Pause Coordinator & pause frozen target set atomically & fan-out correlation only & zero-state coordinator \\
\bottomrule
\end{longtable}

\subsection{Cross-module invariants}
The graph must preserve the following cross-module invariants:
\begin{enumerate}[label=(I\arabic*)]
\item every open financed position has one bound debt record and a consumed approval;
\item collateral cannot be released while canonical debt remains positive;
\item debt cannot decrease without admitted settlement evidence;
\item one evidence receipt cannot create two financial deltas;
\item trader residual is released only after debt and admitted fees;
\item reserve, LBP, and Senior loss application is ordered and conservative;
\item queued shares remain in the relevant share supply until service and burn;
\item a pause coordinator cannot report a state different from its targets;
\item recovery cannot create a duplicate committed financial effect in the same operation domain;
\item no service projection or controller becomes a financial ledger;
\item capability freshness and book-epoch continuity condition every execution authority;
\item legal entity, custody class, and permitted use are manifest-bound and can force a zero-mutation \texttt{LEGAL\_NO\_GO}.
\end{enumerate}
These invariants form the implementation-level specialization of the accounting properties in Papers I and II rather than a replacement for them. Invariant (I8) is the cross-cutting counterpart of financial-domain authority uniqueness: the coordinator cannot substitute for, contradict, or aggregate away the target-owned pause vector $P_{\mathcal T}$.

\section{Single authority and compositional closure}
\begin{definition}[Composition-level no-secondary-ledger property]
For every nonempty subset $S\subseteq\mathcal D$ and every finite composition $\Gamma$ of controllers, adapters, services, and test utilities, the authoritative state on $S$ is the projection of the states stored by $\{A(d):d\in S\}$. No mutable state owned by $\Gamma$ may determine, replace, or override an authoritative read or write on $S$ independently of the canonical authorities; merely consulting and then overriding them is not sufficient.
\end{definition}

\begin{assumption}[Manifest-complete call surface]
Every mutating command is directed to an address and selector present in a sealed deployment manifest. The manifest identifies all canonical authorities, authority contexts, venue-capability requirements, and allowed storage-free controllers.
\end{assumption}

\begin{theorem}[Financial-domain subset compositional authority uniqueness]
Suppose the authority map $A$ is total and single-valued, every financial transition writes only to $A(d)$ for each affected domain $d$, and every component composition satisfies the composition-level no-secondary-ledger property. Then for every nonempty $S\subseteq\mathcal D$, the composed system has at most one authoritative representation of the joint financial state $X_S=\prod_{d\in S}X_d$.
\end{theorem}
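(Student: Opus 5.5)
The plan is to argue by contradiction on representations. First I fix what an ``authoritative representation'' of $X_S$ is: a map $\phi$ from the composed system's mutable storage to $X_S$ whose value is what authoritative reads on $S$ return and what authoritative writes on $S$ update. The canonical candidate is
\[
\phi^\ast=\prod_{d\in S}\pi_d\circ\sigma_{A(d)},
\]
where $\sigma_{A(d)}$ is the storage of module $A(d)$ and $\pi_d$ extracts its $X_d$ component. Because $A$ is total and single-valued, each $d\in S$ contributes exactly one factor, so $\phi^\ast$ is well defined for every nonempty $S$.

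Next I show that any authoritative representation $\phi$ agrees with $\phi^\ast$. This goes by induction along executions of the transition system $T_u$, starting from the manifest-sealed initial state. At initialisation no financial transition has yet occurred. For the inductive step, every financial transition writes only to $A(d)$ for each affected $d$, by hypothesis. Under the Manifest-complete call surface assumption, every mutating command reaches only manifest-listed addresses and selectors, so no unlisted component receives a financial write. The remaining task is to rule out a $\phi$ that reads state owned by some composition $\Gamma$ of controllers, services, adapters, or test utilities. If $\phi$ depended on $\Gamma$-owned mutable state in a way not determined by $\phi^\ast$, then two system states could share the same canonical projection but have different $\phi$-values. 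In that case $\Gamma$'s state would ``determine, replace, or override'' an authoritative read on $S$, which contradicts the composition-level no-secondary-ledger property applied to $S$ and $\Gamma$. Hence $\phi$ factors through $\phi^\ast$; and since $\phi$ is authoritative exactly where $\phi^\ast$ is, the two coincide.

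The subset quantifier needs a brief separate check, because a joint representation of $X_S$ might be assembled from separately unique per-domain pieces, such as a controller caching a cross-domain aggregate. This case is already excluded, since the no-secondary-ledger property is imposed for every nonempty $S$ and every finite $\Gamma$, not only for singletons. Pause state is not in $\mathcal D$, so the coordinator's fan-out is irrelevant here.

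I expect the main obstacle to be the factoring step. The theorem is essentially definitional, and its strength depends on formalising ``authoritative representation'' precisely enough that the no-secondary-ledger clause yields functional dependence on $\phi^\ast$ rather than only an informal prohibition. My first attempt would be to define authority operationally: a representation is authoritative if some admissible command's guard $G_P$ or effect reads it. Then, via the definition of a canonical command (whose effect is determined exclusively by canonical storage authorities), any such dependence must pass through $\{A(d)\}$. Finally, I would state explicitly that the result is ``at most one'' rather than ``exactly one'', and that existence follows from $\phi^\ast$ itself.
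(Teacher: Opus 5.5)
Your proposal is correct and follows essentially the paper's argument: by contradiction, with the canonical product of the states stored by $\{A(d):d\in S\}$ as the one representation, and the no-secondary-ledger property ruling out any alternative determined by $\Gamma$-owned state independently of that product; quantifying over every nonempty $S$ covers synthetic multi-domain ledgers, exactly as in your subset check. Your factoring of $\phi$ through $\phi^\ast$ is a clean formal version of the same step, but the induction over executions does no real work, and your appeal to the manifest-complete call surface assumption is redundant given the hypothesis that financial transitions write only to $A(d)$.
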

\begin{proof}
Assume a nonempty subset $S$ has two authoritative joint representations. One is the canonical product of the states stored by $\{A(d):d\in S\}$. The other must be determined by mutable state in some composition $\Gamma$ or by a different storage-owning module. Because the alternative can determine or override an authoritative read or write on $S$ independently of the complete canonical product, it violates the composition-level no-secondary-ledger property. The argument covers both a second per-domain ledger and a synthetic subset ledger that combines several domains. Correlation metadata, append-only observations, and rebuildable projections do not contradict the theorem because they are explicitly non-authoritative.
\end{proof}

\begin{proposition}[Pause-vector authority and coordinator non-authority]
Let $\mathcal T$ be the nonempty manifest-frozen pause target set. If each $p_m$ is stored only by target module $m$, the coordinator owns no pause state, and fan-out reverts when any target call fails, then the authoritative pause state is exactly the indexed vector $P_{\mathcal T}=(p_m)_{m\in\mathcal T}$. A coordinator event or projection cannot establish a pause state different from its targets.
\end{proposition}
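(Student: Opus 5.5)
The plan is to run the argument in the same shape as the proof of the subset authority-uniqueness theorem, now applied to the cross-cutting pause state instead of a financial domain. The proof has two halves. First, I show that every authoritative pause fact is a function of the indexed vector $P_{\mathcal T}=(p_m)_{m\in\mathcal T}$. Second, I show that no coordinator-side artefact can assert a value that differs from that vector, both after a successful fan-out and after a failed one.

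\emph{Step 1 (locating storage).} By hypothesis each component $p_m$ is stored only by target module $m$. By the definition of the cross-cutting pause state and the storage-free controller definition, the coordinator owns no mutable pause flag. The target set $\mathcal T$ is manifest-frozen. Under the manifest-complete call surface assumption, any mutating pause command is therefore directed to a manifest-listed selector on some $m\in\mathcal T$, or to the coordinator, which only forwards to such selectors. So the only mutable storage that carries pause information is the collection $\{p_m\}_{m\in\mathcal T}$. Every pause predicate an action consults is evaluated against its registered target and selector policy, that is, against some component or sub-vector of $P_{\mathcal T}$. This makes $P_{\mathcal T}$ exactly the authoritative state: it is no less, since every component is read, and no more, since no other storage exists.

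\emph{Step 2 (coordinator non-authority).} I take a coordinator event or projection $e$ and argue by contradiction. Suppose $e$ asserts a pause state $\hat P\neq P_{\mathcal T}$ after some transaction. There are two cases.

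\begin{enumerate}
\item If the fan-out reverted because some target call failed, then EVM atomicity discards every target write and the coordinator's event together. No event survives to assert $\hat P$.
\item If the fan-out committed, every target call succeeded inside the same atomic transaction. Each $p_m$ therefore holds the commanded value, and the correlation event records exactly those committed calls, so $\hat P=P_{\mathcal T}$.
\end{enumerate}

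Projections are handled through the projection equation $y=P(J(x))$. They are materialized from canonical observations and never feed back into $x$, so a projection that disagrees with $P_{\mathcal T}$ is simply stale or incorrect. It has no authority, because no transition of the form $x_{t+1}=F(y_t)$ exists. I would also note the distinction drawn in the definition: ``all paused'' and ``some paused'' are both derived predicates of $P_{\mathcal T}$ and introduce no new state.

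\emph{Main obstacle.} The delicate point is the committed case in Step 2. It needs the claim that the coordinator's event content is determined by the target calls actually executed, and not by the coordinator's own arguments. A target might accept a pause call as a no-op, or might unpause independently in a later transaction. I would handle this in two ways. First, I would restrict the claim to the state at the block containing the event. Second, I would treat events as correlation metadata only: they are non-authoritative, exactly as in the proof of the subset theorem, so even a divergent event describing a past instant cannot \emph{establish} a state. Authority is read from the $p_m$ alone, and the statement only asserts non-establishment, so this suffices without requiring the event to be perpetually accurate.
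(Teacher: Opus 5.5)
Your proposal is correct and follows essentially the same route as the paper: target-owned storage with a storage-free coordinator, atomic fan-out ruling out a committed strict subset, and non-authority of coordinator events and projections (the paper compresses this last step into an appeal to invariant (I8) and ordered target events). The load-bearing point in your committed case is the fallback that correlation events are non-authoritative and so cannot establish state; the stronger accuracy claim $\hat P=P_{\mathcal T}$ is not needed.
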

\begin{proof}
Each component is owned by its target module and the coordinator stores none of them. Atomic fan-out prevents a committed strict subset in the successful path. Therefore invariant (I8) reduces pause verification to the target-owned indexed vector and ordered target events.
\end{proof}

\begin{corollary}[Projection non-authority]
A service, API, browser, or audit projection can be stale or unavailable without changing canonical financial state. Conversely, changing a projection alone cannot create debt repayment, collateral release, share issuance, impairment, or a target-owned pause transition.
\end{corollary}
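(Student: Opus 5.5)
The corollary has two directions, and I would derive each as a direct consequence of the projection equations from the section on financial state and projection state, together with the Projection rebuildability proposition and the compositional authority uniqueness theorem.

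For the first direction (a stale or unavailable projection does not change canonical state), I would use the evolution law $x_{t+1}=T_u(x_t)$ and $y_{t+1}=P(J(x_{0:t+1}))$. Since $T_u$ depends only on canonical state together with the manifest, role and fixed evaluation parameters, and the model admits no transition of the form $x_{t+1}=F(y_t)$, any alteration of $y$ leaves the trajectory of $x$ unchanged. This covers $y$ lagging, being absent, or being deleted. Projection rebuildability then adds that $y$ can be recovered from $J$ without any financial transaction, so unavailability cannot cost financial state either.

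For the converse, I would argue by contradiction. Suppose a change to a projection alone produced a debt repayment, collateral release, share issuance, impairment, or target-owned pause transition. In each case this is a change to some component $X_d$ with $d\in\mathcal D$, or to a component $p_m$ of $P_{\mathcal T}$. By the theorem on compositional authority uniqueness, the only authoritative representation of $X_d$ is the state stored by $A(d)$. Any mutation of that state must therefore be a canonical command that is manifest-authorized under the manifest-complete call surface assumption and passes $G$. A projection write is not such a command: the service may only propose commands, and the canonical preconditions decide whether they commit. So the projection would be acting as mutable state that determines an authoritative write on $\{d\}$, which violates the composition-level no-secondary-ledger property (equivalently invariant (I10)). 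For pause, I would cite the Pause-vector authority proposition instead, since it shows that no coordinator event or projection can establish a pause state that differs from its targets.

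The main obstacle is a subtle case: a service proposes a command on the basis of a stale $y_t$, and that command commits. I would dispose of it by noting that the effect is then caused by a canonical command evaluated against the current $x_t$ through $G_P$ and the module preconditions. It is not caused by the projection change alone, so it falls outside the claim. If the paper required more rigor, I would add an explicit premise that every $T_u$ reads only canonical state and the fixed evaluation context, which the notation already presumes.
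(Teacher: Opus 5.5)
Your proposal is correct and follows the same route as the paper. The paper gives no separate proof; it places the corollary directly after the compositional authority-uniqueness theorem and the pause-vector proposition, and it rests on those together with the projection model $y_{t+1}=P(J(x_{0:t+1}))$ (no transition $x_{t+1}=F(y_t)$) and projection rebuildability, exactly as you do. For the pause clause, you can also cite the paper's statement that target-owned pause context is a fixed evaluation parameter not inferred from service projections, since that covers mutation of $p_m$ directly, whereas the pause-vector proposition only covers the reported pause state.
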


The result provides a testable architecture criterion. A bridge that stores its own scenario state, queue shares, pause flag, or evidence manager violates the criterion even if its terminal hashes equal a fixture. Such a bridge may remain useful as a harness, but it cannot be promoted to the canonical graph.

\section{Atomic origination}
Position origination consumes risk capacity, locks trader collateral, and creates debt. A partial completion is unsafe: an approval could be consumed without a position, a position could exist without debt, or debt could exist without collateral.

Let $a$ be an approval, $p$ a position, and $d$ a debt record. The intended transition is
\[
T_{\mathrm{orig}}:(a,p,d)\mapsto(a',p',d'),
\]
where $a'$ is consumed, $p'$ is open with locked collateral, and $d'$ contains the principal and authorized terms.

\begin{algorithm}[t]
\caption{Atomic canonical origination}
\label{alg:origination}
\begin{algorithmic}[1]
\Require manifest-authorized modules; valid approval $a$; position request $q$
\State verify $q$ references the same pool, market, account, leverage tier, and collateral as $a$
\State call \texttt{RiskManager.consumeApproval}$(a)$
\State call \texttt{PositionManagerCore.openPosition}$(q)$
\State call \texttt{DebtManagerCore.createDebt}$(q)$
\State emit non-authoritative correlation metadata
\Ensure either all canonical calls commit or the EVM reverts the complete transaction
\end{algorithmic}
\end{algorithm}

\begin{theorem}[Atomic origination closure]
If the three canonical calls in \cref{alg:origination} execute in one EVM transaction, each successful call realizes its specified effect, and every failure is propagated so that the complete transaction reverts, then its committed post-state cannot contain exactly one or two of the three effects: approval consumption, position opening, and debt creation.
\end{theorem}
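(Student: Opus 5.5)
The plan is a case analysis on how the single transaction terminates, using EVM all-or-nothing semantics as the only structural fact. Let $E_1,E_2,E_3$ be the effects of \texttt{consumeApproval}, \texttt{openPosition}, and \texttt{createDebt}. For the committed post-state $x'$, let $k(x')\in\{0,1,2,3\}$ be the number of these effects realized relative to the pre-state $x$. It suffices to show $k(x')\in\{0,3\}$.

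\emph{Step 1 (model of the transaction).} I will treat the execution of \cref{alg:origination} as one EVM transaction $\tau$. The call sequence is line 2, line 3, line 4. The consistency check in line 1 and the event emission in line 5 write no canonical storage; the latter holds by the storage-free controller definition. The committed outcome of $\tau$ is then either the state after all executed writes, when $\tau$ succeeds, or exactly $x$, when $\tau$ reverts. This is the standard transactional semantics. No intermediate storage of $\tau$ becomes part of canonical state; only committed post-states are.

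\emph{Step 2 (case split).} Suppose $\tau$ reverts. Then $x'=x$ and $k=0$, because the effects are changes relative to $x$. Suppose instead $\tau$ commits. By the propagation hypothesis, no call among lines 2--4 failed, since any failure would have reverted $\tau$. Each call therefore returned successfully. By the hypothesis that each successful call realizes its specified effect, $E_1,E_2,E_3$ all hold, so $k=3$. Mixed outcomes would require some call to fail while $\tau$ still commits, and the propagation hypothesis excludes that directly.

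\emph{Main obstacle.} The delicate step is justifying that a realized effect is not undone or overwritten by a later call inside the same transaction. For example, \texttt{createDebt} must not reset the approval or position record. I would handle this with the single-authority hypotheses. $E_1$ lives in \texttt{RiskManager}, $E_2$ in \texttt{PositionManagerCore}, and $E_3$ in \texttt{DebtManagerCore}, and by \cref{tab:authority} these are distinct storage authorities. Each transition writes only to $A(d)$ for its own domain, as in the authority-uniqueness theorem, so later calls cannot mutate earlier effects' storage. I would also note explicitly that "effect" means a change relative to the pre-state. Any pre-existing consumed approval or open position belongs to other identifiers, and the line 1 consistency check ties all three calls to the same $(a,q)$. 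This closes the argument.
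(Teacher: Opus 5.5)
Your case analysis is correct and is essentially the paper's argument: EVM transaction atomicity plus non-swallowed reverts leaves only the untouched pre-state or the state in which all three calls succeeded. Your extra non-interference step covers a point the paper's one-line proof leaves implicit (it reads ``realizes its specified effect'' as surviving to commit), but the write-confinement you cite is a hypothesis of the authority-uniqueness theorem rather than of this statement, just as the paper states the analogous non-distortion assumption explicitly only for the waterfall theorem.
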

\begin{proof}
EVM transaction semantics apply state updates atomically \citep{wood2014yellowpaper}. If any call reverts and the revert is not swallowed, all prior state changes in the transaction are reverted. Therefore the only reachable outcomes are the complete pre-state or the complete committed post-state.
\end{proof}

The result concerns the committed transaction boundary, not an arbitrary intermediate callback or an external venue action. Caught exceptions and unchecked low-level failure returns do not satisfy its assumptions \citep{solidityerrors2026}. The controller owns immutable addresses only. It does not cache approval state, position state, or debt. The manifest therefore becomes part of the correctness boundary: calling a different address or selector is a different system, not the same origination under another configuration.

\section{Settlement evidence and durable debt finalization}
\subsection{Why settlement is not one transaction}
External execution and settlement can arrive asynchronously. An emulator or venue adapter reports match and settlement evidence; the evidence registry classifies a receipt; the debt manager applies admitted value; the position manager transitions the lifecycle; and residual/collateral release follows only after canonical getters prove finality. Compressing all steps into one transaction is neither always possible nor desirable. The appropriate abstraction is a durable saga \citep{gray1993transaction,kleppmann2017ddia}.

For a unique evidence identifier $e$, let
\[
R(e)\in\{\textsf{new},\textsf{admitted},\textsf{replay},\textsf{conflict},\textsf{recovered}\}
\]
be the registry state. Let $\Delta D(e)$ be the debt reduction applied from $e$.

\begin{assumption}[Receipt and application uniqueness]
A settlement observation is bound to a stable logical settlement domain and payload. Reusing the domain cannot admit conflicting content or authorize another application by changing only a transport identifier. Evidence admission and financial application are distinct facts. In the debt authority, checking the applied status, recording application, and mutating debt form one atomic guarded transition for that domain. The result below is conditional on this application guard; admission uniqueness alone is insufficient.
\end{assumption}

The durable settlement sequence is formalized in \cref{alg:settlement}. The algorithm distinguishes the existing registry status from the debt manager's existing application status; it introduces no new module or receipt type.

\begin{algorithm}[t]
\caption{Durable canonical settlement saga: admission is not application}
\label{alg:settlement}
\begin{algorithmic}[1]
\Require command $c$, evidence $e$, sealed manifest $M$; receipt/application uniqueness
\State append command and its stable idempotency key to the durable inbox
\State submit or reconcile $e$ at \texttt{SettlementEvidenceRegistry}
\If{receipt is conflicting, inadmissible, or its canonical status is uncertain}
  \State retain exception/reconciliation evidence
  \State \Return without a new financial mutation
\EndIf
\State read the canonical application status for the same settlement domain and payload
\If{the settlement effect is already committed}
  \State do not reapply it; read canonical debt and position
\ElsIf{receipt has a valid admission and application is reliably not yet committed}
  \State call the atomically guarded \texttt{DebtManagerCore.applyAdmittedSettlement}
\Else
  \State \Return for reconciliation; do not invent a new key or assume application
\EndIf
\State confirm the application outcome, or return for reconciliation
\State reconcile any still-required permitted \texttt{PositionManagerCore} transition
\State persist the confirmed application checkpoint; rebuild missing projections
\State evaluate residual/collateral release separately under its own canonical eligibility guard
\end{algorithmic}
\end{algorithm}

\begin{theorem}[Receipt-idempotent debt application]
Under receipt and application uniqueness, the same admitted settlement can produce at most one committed debt application. A retry after its successful application produces zero additional debt reduction. An admitted but unapplied receipt remains eligible for its first guarded application; registry replay alone does not establish debt repayment.
\end{theorem}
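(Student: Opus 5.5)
The proof reduces the statement to the atomic application guard in the receipt and application uniqueness assumption. Fix a logical settlement domain $\delta$ with its bound payload. By the first clause of the assumption, every observation that reaches the debt authority for $\delta$ carries the same content. Changing only a transport identifier creates neither a new domain nor a second authorization. So it suffices to reason about a single key $\delta$ and a Boolean application flag $\alpha(\delta)$ stored in \texttt{DebtManagerCore}, the sole debt authority by \cref{tab:authority}. By the subset authority-uniqueness theorem, no controller, saga journal, or projection holds a second copy of $\alpha$ or of debt. Hence any committed debt reduction attributable to $\delta$ must pass through \texttt{applyAdmittedSettlement} on that single storage location.

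\textbf{At most one application.} The guarded transition reads $\alpha(\delta)$, requires $\alpha(\delta)=0$ and a valid admission, then sets $\alpha(\delta)=1$ and applies $\Delta D$ as one atomic step. EVM transactions on the authority are totally ordered, and this step is atomic, as in the atomic origination closure theorem. Consider two committed applications for $\delta$. The later one must have observed $\alpha(\delta)=1$, since the flag is never reset: the assumption gives no reset selector, and the only writer is the guarded call. It therefore reverted, a contradiction. Summing over all commits gives $\sum \Delta D(\delta)\le \Delta D$ applied once.

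\textbf{Retries.} Any retry after a successful application falls into one of two cases. Either the saga takes the ``already committed'' branch of \cref{alg:settlement} and issues no mutating call, or it reaches the guard and reverts. Both contribute zero additional debt reduction. A misclassified branch is harmless because the on-chain guard, not the saga, decides.

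\textbf{Admitted but unapplied receipts.} Suppose $R(e)$ records admission but $\alpha(\delta)=0$. Then the guard's precondition holds, so the first guarded call succeeds, provided the other admissibility predicates $G$ hold. Conversely, a registry status of \textsf{replay} only classifies a resubmitted observation. Because admission and application are distinct facts, \textsf{replay} does not set $\alpha$ and does not mutate debt. Invariant (I3) together with the single writer of debt then shows that replay alone cannot establish repayment.

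The main obstacle is justifying that the flag is monotone and uniquely keyed. This means ruling out an alternate write path, such as a governance recovery or a reconfigured adapter, that could clear $\alpha(\delta)$ or re-key $\delta$. I would discharge it from the manifest-complete call surface assumption: the only manifest-authorized selectors touching $\alpha$ are the guarded ones. Recovery is recorded in the evidence registry, not in the debt application flag.
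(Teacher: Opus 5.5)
Your proposal is correct and follows the paper's own argument. The first successful call atomically consumes the applied-status guard together with the debt mutation, every later or concurrent call in the same settlement domain meets that guard and adds no delta, and an admission-only state leaves the guard unconsumed, so a registry status of admitted or replay cannot stand in for repayment. One adjustment concerns your discharge of monotonicity: the manifest-complete call surface only restricts which selectors may be called and does not by itself exclude a manifest-listed selector that clears the flag, whereas the paper's proof simply takes permanence of the recorded application as part of the atomic guard in the receipt and application uniqueness assumption.
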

\begin{proof}
The first successful application atomically records the applied state with the debt mutation. Every concurrent or later call in the same settlement domain therefore either supplies that first application or encounters the already-applied guard and produces no duplicate effect. If only receipt admission committed, the applied guard has not yet been consumed. Treating that state as completed would omit the first application rather than prove idempotence.
\end{proof}

\begin{corollary}[Conflict safety]
If conflicting evidence cannot enter the admitted branch, a conflict cannot reduce debt, release residual, or release collateral.
\end{corollary}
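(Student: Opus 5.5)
The plan is to reduce the corollary to the receipt-idempotent debt application theorem together with the branch structure of \cref{alg:settlement}. The argument is stated over the three financial effects named in the claim. Each one will be traced back to a single guarded entry point that is reachable only through the admitted branch.

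\textbf{Debt reduction.} By invariant (I3), canonical debt decreases only through \texttt{DebtManagerCore.applyAdmittedSettlement}. Under the receipt and application uniqueness assumption, that call is the atomically guarded transition. It requires an admitted receipt for the settlement domain and payload.

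A conflicting observation $e$ has $R(e)=\textsf{conflict}$. By hypothesis it cannot enter the admitted branch. In \cref{alg:settlement} the conflict case returns at the early exit, retaining only exception evidence and making no financial mutation. The assumption also says that reuse of a domain cannot admit conflicting content. Suppose the same domain already holds an admitted receipt $e'$ with a different payload. Then any application is bound to $e'$'s payload, not to $e$. By the idempotency theorem, the domain admits at most one committed application in total. So $e$ contributes $\Delta D(e)=0$.

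\textbf{Residual and collateral release.} These are gated separately. Invariants (I2) and (I5) require canonical debt to be zero, together with admitted fees, before residual or collateral release, and the last step of \cref{alg:settlement} evaluates release under its own eligibility guard. A conflict changes no debt state, so it cannot make that guard newly true. Any release that occurs is therefore attributable to admitted evidence, not to the conflict.

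\textbf{Main obstacle.} The delicate point is that the corollary must hold against off-path callers as well as against the saga's own control flow. A service or controller might bypass the saga and call the debt or position selectors directly. I would close this gap using the manifest-complete call surface assumption together with the no-secondary-ledger property, via the subset authority uniqueness theorem. These guarantee that the only mutating paths are the canonical guarded selectors, so the guards themselves, and not the orchestration, enforce the result.

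I would also state explicitly that "cannot" refers to the conflict's causal contribution. This matches the concurrency caveat in the legal no-go proposition: unrelated admitted transitions may still change state concurrently.
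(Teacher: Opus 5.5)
Your proof is correct and follows the reasoning the paper relies on. The paper gives no separate proof: it treats the corollary as immediate from two facts, namely that admitted receipts are the only route to the atomically guarded debt application, and that residual and collateral release are gated on canonical debt clearance. Your off-path-caller paragraph and causal reading of ``cannot'' are sound hardening, not a different route.

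One step should be made explicit. When earlier admitted evidence has already cleared the debt, the release guard is already true, so ``the conflict cannot make the guard newly true'' does not settle the residual case by itself. Add that the released residual $y^T$ is computed only from admitted settled cash $y$, which excludes the conflicting payload.
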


\subsection{Residual and collateral release}
Settlement-confirmed value follows the debt-first rule from Paper II \citep{nechepurenko2026credit}. Let $y\ge0$ be admitted settled cash, $D\ge0$ the outstanding debt due under the registered repayment rule (including accrued interest classified as debt), and $F\ge0$ separately admitted fees. Amounts in $D$ are not counted again in $F$. Then
\[
y^D=\min\{y,D\},\qquad y^F=\min\{[y-y^D]^+,F\},\qquad y^T=[y-y^D-y^F]^+.
\]
The service cannot infer debt clearance from a local projection. Release is allowed only if canonical getters prove zero principal, zero accrued interest, no pending residual precondition, and a final debt-free lifecycle state.

\section{Canonical loss waterfall}
A liquidation request and an actual allocation are different operations. The liquidation manager records the lifecycle and execution evidence; the canonical reserve, LBP, and Senior modules own the capital balances.

Let the recognized shortfall be $\ell\ge0$ and the ordered capacities be
\[
(c_0,c_R,c_J,c_G,c_S),
\]
where the capacities are nonnegative and non-overlapping: $c_0$ is position-specific equity or buffer, $c_R$ segregated reserve, $c_J$ eligible LBP capital, $c_G$ a configured global reserve, and $c_S$ approved Senior loss capacity. The shortfall $\ell$ is measured before these listed layers are applied; a buffer already netted from the shortfall is not counted again. In the initial local deployment, absent $c_0$ and $c_G$ are explicit zero layers.

Define recursively
\[
a_k=\min\left\{c_k,\left[\ell-\sum_{j<k}a_j\right]^+\right\}.
\]

\begin{proposition}[Unique ordered allocation]
The vector $a=(a_0,a_R,a_J,a_G,a_S)$ is unique, satisfies $0\le a_k\le c_k$, preserves priority, and obeys
\[
\sum_k a_k+\ell_{\mathrm{res}}=\ell,
\]
where $\ell_{\mathrm{res}}\ge0$ is unabsorbed loss.
\end{proposition}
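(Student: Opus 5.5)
The plan is to argue by induction along the fixed layer order $0,R,J,G,S$, with the remaining loss as the running quantity. Define the residual before layer $k$ by $r_k=\ell-\sum_{j<k}a_j$; in particular $r_0=\ell$. Set $\ell_{\mathrm{res}}=r_{\mathrm{end}}$, the residual after the Senior layer. Uniqueness is immediate: the recursion defines each $a_k$ as a function only of $\ell$, the capacities $c_0,\dots,c_k$ and the earlier $a_j$. So, by induction over the finite ordered index set, the vector $a$ is determined once $(\ell,c)$ and the order are fixed. I will stress that uniqueness is relative to that fixed ordering, since a permuted ordering gives a different (also unique) vector.

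The central invariant is $r_k\ge0$ for all $k$, including the terminal residual. I will prove it by induction. The base case is $r_0=\ell\ge0$. For the step, $r_k\ge0$ gives $[r_k]^+=r_k$, hence $a_k=\min\{c_k,r_k\}$, and so
\[
r_{k+1}=r_k-\min\{c_k,r_k\}=[r_k-c_k]^+\ge0 .
\]
The bounds then follow. First, $a_k\ge0$ because it is the minimum of two nonnegative numbers. Second, $a_k\le c_k$ by the definition of the minimum. Conservation is a telescoping identity: $\sum_k a_k=r_0-r_{\mathrm{end}}=\ell-\ell_{\mathrm{res}}$, and $\ell_{\mathrm{res}}=r_{\mathrm{end}}\ge0$ by the invariant.

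For priority I need a precise meaning, and I will use this one: if $a_k>0$ for some $k$, then every earlier layer $j<k$ is exhausted, $a_j=c_j$. Equivalently, a later layer absorbs loss only when the residual survives all earlier layers. Suppose instead $a_j<c_j$ for some $j<k$. Then $a_j=r_j$, so $r_{j+1}=0$. Since $r_{i+1}=[r_i-c_i]^+$, each later residual is also $0$, giving $a_k=0$. A corollary is that $\ell_{\mathrm{res}}>0$ only if every layer is fully used, which identifies $\ell_{\mathrm{res}}=[\ell-\sum_k c_k]^+$. I will also note why the explicit zero layers $c_0=c_G=0$ are harmless: they contribute $a=0$ and leave the residual unchanged.

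The main obstacle is the priority clause rather than the algebra, because it has to be stated in exactly the form used here. The non-overlap and pre-netting hypotheses on $\ell$ and the $c_k$ matter only for interpreting the result economically, namely that no capital is counted twice. The mathematical proof uses only nonnegativity and the fixed order.
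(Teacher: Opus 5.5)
Your proof is correct and follows the same route as the paper's: the recursion fixes each $a_k$ in turn, the minimum and the positive part give $0\le a_k\le c_k$, priority holds because later layers receive loss only after earlier capacities are exhausted, and conservation is the definition of the residual. Your explicit invariant $r_k\ge0$ and the telescoping identity make precise what the paper's one-line proof leaves implicit, notably that $\ell_{\mathrm{res}}\ge0$.
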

\begin{proof}
The recursion uniquely determines each $a_k$ from prior allocations. The minimum enforces capacity, the positive part enforces non-negativity, and later layers receive positive loss only after earlier capacities are exhausted. Conservation follows by definition of the residual.
\end{proof}

\begin{theorem}[Atomic multi-ledger waterfall]
Suppose a storage-free controller computes $a$ through a pure, versioned allocation library and invokes canonical reserve, LBP, and Senior loss selectors in one EVM transaction. Assume that every successful canonical call applies exactly its prescribed component of $a$; every mismatch, insufficient authorization, or required-call failure propagates to a revert of the root transaction; and no later action in that transaction can cancel or distort an allocation component already applied. Then the committed post-state either reflects the entire ordered allocation $a$ or no layer mutation. The controller cannot retain loss capital.
\end{theorem}

The root-revert condition is essential: a controller that catches or ignores a failed child call does not satisfy the theorem's assumptions. Under the stated conditions, EVM transaction atomicity rolls back every earlier layer mutation whenever any required component cannot be applied; on success, the exact-component and non-distortion conditions preserve the ordered allocation computed in Proposition 7.1.

The explicit layer ordering and zero-layer visibility are illustrated in \cref{fig:waterfall}.

\begin{figure}[t]
\centering
\begin{tikzpicture}[node distance=0.55cm and 0.75cm,
  box/.style={draw=axblue!65,fill=axgray,rounded corners=2pt,minimum width=3.4cm,minimum height=0.8cm,align=center,font=\small},
  arr/.style={-{Latex[length=2.4mm]},thick,draw=axnavy!80}]
\node[box] (loss) {Recognized shortfall $\ell$};
\node[box,right=of loss] (buf) {Position buffer\\$a_0$};
\node[box,right=of buf] (res) {Reserve\\$a_R$};
\node[box,below=of res] (lbp) {LBP impairment\\$a_J$};
\node[box,left=of lbp] (glob) {Global reserve\\$a_G$};
\node[box,left=of glob] (sen) {Senior loss\\$a_S$};
\draw[arr] (loss)--(buf);\draw[arr](buf)--(res);\draw[arr](res)--(lbp);\draw[arr](lbp)--(glob);\draw[arr](glob)--(sen);
\end{tikzpicture}
\caption{Canonical ordered waterfall. Zero or absent layers remain explicit in the deployment manifest rather than being silently omitted.}
\label{fig:waterfall}
\end{figure}

The distinction between eligible loss-absorbing junior capital $J^{abs}$ and withdrawal-service headroom $J^{free}$ is load-bearing. The waterfall may use $J^{abs}$, not an operational liquidity proxy. A unit of LBP capital cannot simultaneously count as protection capacity, free withdrawal liquidity, and another certificate allocation.

\section{Loss-participating withdrawal queues}
A withdrawal request is denominated in shares. Until actual service and burn, queued shares remain in the canonical share supply and participate in subsequent impairment. This property prevents an already recognized loss from being transferred to slower providers by a first-exit request \citep{nechepurenko2026credit}.

Let $z_i$ be shares owned by provider $i$, $q_i\le z_i$ queued shares, and $V$ provider NAV. The queued claim at time $t$ is not a fixed cash amount; it is
\[
W_{i,t}=q_i\pi_t,\qquad \pi_t=\frac{V_t}{Z_t},
\]
for $Z_t>0$, subject to holdbacks, encumbrance, and available cash at service. A loss before service changes $V_t$ and therefore changes the queued claim. Capital still reserved to absorb loss is not automatically available for withdrawal.

\begin{proposition}[Queue participation]
If queued shares remain included in $Z_t$ until service and burn, and losses reduce canonical NAV before service, then requesting withdrawal cannot exempt the queued shares from that loss.
\end{proposition}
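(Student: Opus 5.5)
The plan is to turn the statement into a short comparison of share prices. I fix a provider $i$ with queued shares $q_i>0$ and take a request time $t_0$, a recognized loss $L>0$ applied at time $t_1>t_0$, and a service time $t_s>t_1$. By hypothesis the queued shares stay in $Z_t$ on the whole interval $[t_0,t_s)$ and are burned only at $t_s$.

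The first step is to record what the loss does to canonical NAV. Under the canonical loss-application selector (the Senior or LBP component $a_S$ or $a_J$ of the allocation in the unique ordered allocation proposition), the loss lowers NAV from $V_{t_1^-}$ to $V_{t_1}=V_{t_1^-}-L$. The share supply does not change at $t_1$, because a loss neither mints nor burns shares and no queued share has been burned yet. So $Z_{t_1}=Z_{t_1^-}$.

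The price then drops to $\pi_{t_1}=\pi_{t_1^-}-L/Z_{t_1}$. Since $q_i$ is still counted in $Z_{t_1}$, the queued claim $W_{i,t}=q_i\pi_t$ falls by exactly $q_iL/Z_{t_1}$. That is the pro-rata share of the loss borne by the queued shares, the same per-share amount borne by unqueued shares. I would state this as the identity $W_{i,t_1}=W_{i,t_1^-}-q_i L/Z_{t_1}$.

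Next I argue that nothing between $t_1$ and $t_s$ restores the loss for the queued shares. The claim is share-denominated rather than a fixed cash amount, so service pays at most $q_i\pi_{t_s}$, further reduced by holdbacks and available cash. I use the single-authority theorem to rule out a secondary ledger, such as a controller or service cache holding a fixed cash entitlement $q_i\pi_{t_0}$, that could override the canonical payout. With that excluded, the only determinant of the payout is the canonical $(V,Z)$ in \texttt{CreditPool} or \texttt{LBPVault}.

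The step I expect to be the main obstacle is ruling out dilution escape. The concern is that other actions between $t_1$ and $t_s$, such as deposits or services to other providers, could raise $\pi$ for reasons correlated with the queue. I would handle this by noting that deposits and services at the fair price $\pi$ leave $\pi$ unchanged. The loss term $-q_iL/Z_{t_1}$ is therefore a permanent deduction relative to the counterfactual with no loss. I would phrase the conclusion as a counterfactual statement: the payout with the loss is lower than the payout without it by exactly the provider's pro-rata share, so the request did not exempt the queued shares.
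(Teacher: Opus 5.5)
Your first three steps already prove the proposition, and they follow the paper's own route. The paper gives no argument beyond the definition $W_{i,t}=q_i\pi_t$ with $\pi_t=V_t/Z_t$ and the remark that a loss before service changes $V_t$ and therefore changes the queued claim. Since the $q_i$ queued shares are still in $Z_{t_1}$ and the loss leaves $Z$ unchanged, each queued share loses $L/Z_{t_1}$, exactly as an unqueued share does. That is what non-exemption means.

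The step that fails is the last one. There you resolve ``dilution escape'' by asserting that the payout at $t_s$ is lower than in the no-loss counterfactual by exactly $q_iL/Z_{t_1}$. Fair-price deposits and services preserve $\pi$, but they do not preserve the \emph{gap} between the two worlds once anything else happens:
\begin{itemize}
\item a deposit $d$ mints $d/\pi_{t_1}$ shares with the loss but only $d/\pi_{t_1^-}$ without it, so any later income or loss is spread over different supplies;
\item a NAV-proportional revaluation rescales the gap;
\item services to other providers subject to holdbacks or encumbrance need not be at the fair price;
\item if the available-cash cap binds in both worlds, the cash paid at $t_s$ can coincide even though more shares are consumed in the loss world, so a payout comparison need not register the loss at all.
\end{itemize}

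The repair is to change the baseline rather than patch these cases: compare queued shares with unqueued shares in the \emph{same} history. A single canonical price $\pi_t$ applies to every share in $Z_t$ until service and burn. So a queued share and an unqueued share have equal value at every $t\le t_s$, and at $t_1$ the other $Z_{t_1}-q_i$ shares bear only $(Z_{t_1}-q_i)L/Z_{t_1}$. No part of the provider's pro-rata loss is therefore shifted onto slower providers. Your appeal to the single-authority theorem is harmless but not needed, since the share-denominated payout is already built into the definition of $W_{i,t}$.
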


The accepted FIA-09 record exercises the actual Senior \texttt{CreditPool} queue, not a bridge-owned mapping. The analogous \texttt{LBPVault} queue remains a successor evidence item rather than part of the accepted record. The evidence must show ordering, queue head, share supply, loss application, service, and burn from canonical views and events.

\section{Protocol-wide pause and governed recovery}
The protocol has multiple modules with separate pause states. A local bridge-level flag is not sufficient: risk-increasing calls could remain available on an unpaused canonical module.

Let $\mathcal T$ be the manifest-frozen target set. Each mutating selector is classified as
\[
\textsf{RISK\_INCREASING},\quad \textsf{RISK\_REDUCING},\quad \textsf{RECOVERY\_ONLY},\quad \textsf{ADMIN\_ONLY}.
\]
The guardian may atomically pause risk-increasing operations across $\mathcal T$. The coordinator stores no pause flag; the authoritative state remains in each target module. Unpause requires a queued and executed \texttt{GovernanceTimelock} operation.

The execution policy also distinguishes two order treatments. Under \texttt{PERSIST\_THROUGH\allowbreak\_PAUSE}, an already-submitted order may remain only when its persistence is demonstrably risk-reducing under the paused capability state. Under \texttt{CANCEL\_ON\allowbreak\_PAUSE}, an order that could increase risk is cancelled, but such an order cannot support a certificate whose hard-flat horizon spans the pause. A pause is not a book reset: reactivation with a new book epoch and post-close lock are separate capability transitions.

\begin{proposition}[Pause-policy certificate consistency]
A hard-flat or reduce-only certificate may rely on an order across a pause only under \texttt{PERSIST\_THROUGH\_PAUSE} and only if the order remains risk-reducing in the registered pause mode. If policy requires \texttt{CANCEL\_ON\_PAUSE}, the order contributes zero certified capacity beyond the pause boundary.
\end{proposition}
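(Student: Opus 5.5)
The plan is to model certified capacity as a sum over orders whose contribution is counted only over the intervals in which the order is guaranteed to persist and to reduce risk. I then split on the registered order treatment. Fix a hard-flat or reduce-only certificate whose horizon $[t_0,t_1]$ contains a pause boundary $\tau$. Each order $o$ contributes capacity $c_o\,\mathbf 1[o \text{ live and risk-reducing on } [\tau,t_1]]$ beyond $\tau$. The premise the certificate needs, stated earlier in the definition of capability continuity, is that every capability it uses stays valid over the registered execution interval. The key step is to read ``relies on an order across a pause'' as requiring the predicate $G_K$ to hold for that order on $[\tau,t_1]$ under the paused capability state $\kappa_\tau=(a_\tau,b_\tau,\rho_\tau)$.

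The first case is \texttt{CANCEL\_ON\_PAUSE}. The policy cancels any order that could increase risk at $\tau$. The proof cannot assume that a given order is classified one way or the other, so it argues the conservative direction: the order's existence after $\tau$ is not guaranteed by the registered policy. Continuity of its capability therefore fails, $G_K=0$ on the post-pause interval, and by the Capability-transition safety proposition a capability that is not continuous cannot support authority. The order's contribution beyond $\tau$ is therefore $0$. This is the second sentence of the statement.

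The second case is \texttt{PERSIST\_THROUGH\_PAUSE}. The order survives $\tau$. I then check risk direction in mode $\rho_\tau$. If the order remains risk-reducing, continuity holds and it may be counted. If it is not risk-reducing under the paused mode, it is a risk-increasing operation. The guardian's atomic pause over $\mathcal T$, together with the Pause-vector authority proposition (the authoritative state is the target-owned vector), revokes its authority. It therefore again yields $G_K=0$ and contributes nothing. Together the two cases give the ``only if'' direction: reliance requires persistence together with risk-reduction. The permissive direction follows from the fact that both conditions make the continuity clause hold.

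I expect the main obstacle to be the formal gap in how ``risk-reducing in the registered pause mode'' connects to certificate capacity. The paper defines neither the certificate algebra nor the interaction between order state and pause. My first attempt would be to state an explicit modeling assumption: certified capacity is additive over orders and counts only orders satisfying continuity. The proposition then becomes a direct case analysis on these predicates. A second difficulty is that a pause is not a book reset. If a new book epoch $b_t\neq b_{t_0}$ also occurs while $D_t>0$, I would invoke Capability-transition safety again to argue that capacity beyond that transition is zero regardless of the order treatment. That result only strengthens the conclusion.
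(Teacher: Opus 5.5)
Your case split on the registered order treatment is the right skeleton. Your \texttt{CANCEL\_ON\_PAUSE} case matches the paper's remark just before the proposition that such an order cannot support a certificate whose hard-flat horizon spans the pause. The paper gives no separate proof; the proposition is read off the two order-treatment definitions in that paragraph.

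The step that fails is the non-risk-reducing subcase under \texttt{PERSIST\_THROUGH\_PAUSE}. You argue that such an order ``is a risk-increasing operation'', that the guardian's atomic pause over $\mathcal T$ plus the pause-vector authority proposition revokes its authority, and that this gives $G_K=0$. None of these links holds.
\begin{itemize}
\item ``Not demonstrably risk-reducing'' and ``risk-increasing'' are not complements. The paper uses ``demonstrably risk-reducing'' as the persistence test and ``could increase risk'' as the cancellation test. An order whose direction is neutral or undetermined falls under neither.
\item The guardian fan-out blocks \texttt{RISK\_INCREASING} selectors on the canonical modules in $\mathcal T$. The pause-vector proposition only says who owns the flags $p_m$. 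Neither result speaks to an order already submitted to the venue.
\item Neither result bears on $G_K$, the venue-capability predicate over $\kappa_\tau=(a_\tau,b_\tau,\rho_\tau)$. The pause relevant to the certificate is the mode $\rho_\tau$, not the target-owned vector $P_{\mathcal T}$. The paper introduces the two order treatments precisely because a protocol pause does not by itself decide the fate of resting orders.
\end{itemize}
The missing ingredient is the definition stated immediately before the proposition: under \texttt{PERSIST\_THROUGH\_PAUSE}, an already-submitted order may remain only when its persistence is demonstrably risk-reducing under the paused capability state. That clause gives the ``only if'' condition directly. Together with your cancellation case, it is the entire argument. Your indicator model cannot substitute for it: $c_o\mathbf 1[o\ \text{live and risk-reducing on}\ [\tau,t_1]]$ builds the risk-reducing requirement into capacity, so necessity is assumed rather than derived from the policy.

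There are two smaller points.
\begin{itemize}
\item The proposition states only necessary conditions, so no permissive direction is needed, and the one you assert is false in general. Persistence plus risk reduction does not make the continuity clause hold. Reactivation with a new book epoch, or a post-close lock, is a separate capability transition that can invalidate a capability the certificate uses. Your own last paragraph in effect concedes this.
\item You cite Capability-transition safety for more than it says. It covers stale capability before admission and the revocation of \emph{risk-increasing} authority after a book-epoch change with $D_t>0$. It does not say a discontinuous capability yields zero certified capacity. Nor does it zero the capacity of a risk-reducing order ``regardless of the order treatment''. For both uses, appeal instead to the definition of capability continuity and to the remark that a previously valid certificate cannot compensate for a transition that makes $G_K=0$.
\end{itemize}
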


\begin{theorem}[Atomic pause fan-out]
If the coordinator calls every target in one EVM transaction and reverts when any target fails, then the transaction cannot commit a strict subset of the target pauses.
\end{theorem}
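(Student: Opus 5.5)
The plan is to mirror the proof of the atomic origination closure theorem, now with the pause target set $\mathcal T$ in place of the three origination calls. I will model the coordinator's fan-out as one EVM transaction that issues a sequence of calls $\mathrm{pause}_{m_1},\dots,\mathrm{pause}_{m_n}$, where $\{m_1,\dots,m_n\}=\mathcal T$ and $n\ge1$ because $\mathcal T$ is nonempty. Each successful call $m_i$ sets the target-owned component $p_{m_i}$. The set of \emph{committed} pauses is whatever set of components differs between the pre-state and the committed post-state of that transaction.

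\textbf{Case split on the outcome of the root transaction.}
\begin{enumerate}
\item \emph{Some target call fails.} By hypothesis the coordinator reverts, and the failure is propagated rather than caught or returned unchecked. EVM transaction semantics \citep{wood2014yellowpaper} then discard every state update made in the transaction, including the pauses already written by earlier targets $m_1,\dots,m_{i-1}$. The committed set of new pauses is therefore empty, which is not a strict nonempty subset.
\item \emph{The transaction commits.} Then no call failed, so every $\mathrm{pause}_{m_i}$ executed successfully and set $p_{m_i}$. The committed set of new pauses is all of $\mathcal T$.
\end{enumerate}
In both cases the committed state is either the pre-state on $P_{\mathcal T}$ or the state in which every target is paused. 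A strict nonempty subset is unreachable.

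Two points need care. First, ``strict subset'' must be read as a strict \emph{nonempty} subset of newly committed pauses; the empty case is the reverted pre-state. I would state this explicitly. Second, the theorem should not depend on the coordinator's own state. By the pause-vector authority proposition, the coordinator stores no $p_m$, so committed pause state is read only from the targets. This also rules out an intermediate coordinator record that survives a revert and claims a partial pause.

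The main obstacle is pinning down precisely what ``reverts when any target fails'' must cover. It has to include low-level call failures and unchecked return values, as noted after the origination theorem \citep{solidityerrors2026}. It also has to cover the case where a target call returns success without actually setting $p_m$. I would handle the latter by reading ``target fails'' to include failure to realize the specified pause effect, the same ``each successful call realizes its specified effect'' hypothesis used in the atomic origination closure theorem. If the statement is meant to be weaker, I would add a post-condition check that reads $p_m$ after each call and reverts on mismatch. Targets that were already paused before the transaction are harmless: they stay in the paused set either way.
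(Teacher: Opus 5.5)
Your proposal is correct and matches the paper's intended argument. The paper gives no separate proof for this theorem; it relies on the same EVM-atomicity reasoning as the atomic origination closure proof, and invokes that reasoning in the pause-vector proposition (``atomic fan-out prevents a committed strict subset in the successful path''). One small tidy-up: state the outcome as the committed value of $P_{\mathcal T}$ (every target paused) rather than as the set of changed components, because with already-paused targets the changed set can be a strict subset even on success. Your closing sentence and your remark on pre-paused targets already put it the right way.
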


\begin{proposition}[Recovery-path availability under selective pause]
If the selector policy preserves every transition needed by the registered risk-reducing path while blocking new risk, a pause does not by itself remove that path's callable selectors. Completion additionally requires admissible evidence, sufficient applicable capacity, canonical availability/finality, and eventual scheduling and retries. Selector availability alone does not prove eventual closure.
\end{proposition}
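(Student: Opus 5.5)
The plan is to reduce the proposition to two separable facts: a \emph{static} fact about which selectors remain callable after the pause fan-out, and a \emph{dynamic} fact that callability alone does not give closure. For the static part I would fix the registered risk-reducing path as a finite sequence of canonical commands $u_1,\dots,u_n$, each bound to a manifest target and selector. By the manifest-complete call surface assumption, each $u_i$ has a selector classified as \textsf{RISK\_REDUCING} or \textsf{RECOVERY\_ONLY}, or it lies outside $\mathcal T$. By the pause-vector authority proposition, the post-pause authoritative state is exactly the target-owned vector $P_{\mathcal T}$. So whether $u_i$ is callable is decided by target $m$'s own selector policy evaluated against $p_m$, and not by any coordinator flag. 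The hypothesis that the policy preserves every transition of the path then gives $G_M=1$ and the pause component of $G_P$ for each $u_i$ directly. The atomic pause fan-out theorem guarantees that no partially paused configuration is committed, so there is no intermediate state to consider. The conclusion is that the pause itself falsifies none of the path's admissibility predicates.

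For the second part I would argue by exhibiting the remaining conjuncts of $G=G_MG_RG_PG_KG_LG_B$. These are untouched by the pause hypothesis, and each can independently equal $0$:
\begin{itemize}
\item \textbf{Evidence.} Debt reduction needs admitted evidence by invariant (I3) and the receipt-idempotent debt application theorem; absent or conflicting evidence blocks it by the conflict-safety corollary.
\item \textbf{Capacity.} Loss absorption needs enough $c_R,c_J,c_S$; otherwise $\ell_{\mathrm{res}}>0$ by the unique ordered allocation proposition.
\item \textbf{Capability and legal context.} $G_K$ or $G_L$ may fail, under the capability-transition and legal no-go safety propositions.
\item \textbf{Liveness.} The durable saga only retries; nothing in it forces the scheduler to act.
\end{itemize}
Each item is a concrete counterexample in which every selector is available but the path does not complete. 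This establishes the negative sentence of the statement: selector availability alone does not prove closure.

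I expect the main obstacle to be the first step's implicit quantification over \emph{every} transition the path needs. That includes indirect calls, such as the waterfall controller invoking reserve, LBP and Senior loss selectors, or position transitions triggered inside debt application. For these, the policy check has to hold at the innermost callee's selector, not just the entry point, since a paused inner call would revert the whole transaction under the root-revert condition. I would handle this by taking the transitive closure of manifest-registered call edges from each $u_i$ and reading the hypothesis as covering that closure. If the paper's hypothesis is meant only for top-level selectors, I would state this closure as an explicit side condition.
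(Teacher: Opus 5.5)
Your proposal is correct and matches the reasoning the paper leaves implicit, since it states this proposition without a separate proof: the path's selectors stay callable directly by the preservation hypothesis plus the target-owned, per-selector pause state, and the remaining sentences hold because callability is only one conjunct of the admissibility predicate, while completion also depends on evidence, capacity, finality, and liveness. Only small tidy-ups are needed: the selector classification you derive from the manifest-complete call-surface assumption actually comes from the pause selector policy and the hypothesis itself, and that hypothesis, read as ``every transition needed by the path,'' already covers inner callees such as the waterfall's reserve, LBP, and Senior selectors, so your closure side condition holds automatically; also, your list of blockers should include the statement's explicit canonical availability/finality requirement, which your $G_K$/$G_L$ item does not cover (an uncertain canonical status sends the saga of Algorithm~\ref{alg:settlement} to reconciliation without mutation, and release requires canonical getters to prove a final debt-free state).
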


The emergency control and delayed recovery sequence are summarized in \cref{fig:pause}.

\begin{figure}[t]
\centering
\begin{tikzpicture}[node distance=0.55cm and 0.55cm,
  box/.style={draw=axblue!65,fill=axgray,rounded corners=2pt,text width=3.0cm,minimum height=0.75cm,align=center,font=\small},
  good/.style={draw=axteal!80!black,fill=axteal!7,rounded corners=2pt,text width=3.0cm,minimum height=0.75cm,align=center,font=\small},
  arr/.style={-{Latex[length=2.4mm]},thick,draw=axnavy!80}]
\node[box] (guardian) {Guardian detects boundary breach};
\node[box,right=of guardian] (fan) {Atomic pause fan-out\\over manifest target set};
\node[good,right=of fan] (safe) {Risk-increasing selectors blocked};
\node[good,below=of fan] (reduce) {Settlement, repayment, liquidation, reconciliation remain available};
\node[box,right=of reduce] (tl) {Timelock queues reviewed recovery};
\node[good,right=of tl] (unpause) {Targeted unpause after delay};
\draw[arr](guardian)--(fan);\draw[arr](fan)--(safe);\draw[arr](fan)--(reduce);\draw[arr](reduce)--(tl);\draw[arr](tl)--(unpause);
\end{tikzpicture}
\caption{Emergency pause and governed recovery. The coordinator fans out commands but does not own the canonical pause state.}
\label{fig:pause}
\end{figure}

\section{Exactly-once recovery}
Distributed systems cannot guarantee exactly-once message delivery in the abstract. The relevant property is narrower: a financial transition that has already committed on-chain must not commit a second time when a worker restarts, while the off-chain projections must remain rebuildable.

Let $\tau$ be a transaction and $e$ its evidence receipt. Let $C$ denote a durable chain-confirmed checkpoint recorded before projection update. The recovery sequence is
\[
\begin{aligned}
\tau\ \text{confirmed}&\rightarrow C\ \text{persisted}\rightarrow\text{worker crash}\rightarrow\text{new worker replay}\\
&\rightarrow\text{replay/no-op}\rightarrow\text{projection rebuild}.
\end{aligned}
\]

\begin{definition}[Financial exactly-once recovery]
For a stable operation domain, key, and payload binding, recovery is financially exactly-once when there is at most one successful committed financial effect, retries are state-inert or read that effect, and derived projections are rebuildable from retained canonical observations. For an operation already known to have succeeded, the effect count is exactly one. Transaction submissions, reverted or no-op attempts, acknowledgements, and distinct settlement/release operations are not duplicate financial effects.
\end{definition}

\begin{theorem}[Checkpointed no-reexecution]
Assume the confirmed checkpoint binds the manifest, stable operation identity, payload, committed application, receipt, and evidence identifier; retries query canonical application state; and the financial selector enforces the atomic application guard. A restart after that checkpoint cannot commit a duplicate financial effect for the same operation.
\end{theorem}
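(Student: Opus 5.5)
The plan is to reduce the statement to the Receipt-idempotent debt application theorem by case analysis on what a restarted worker can observe and do. Fix the stable operation domain, key, and payload bound by the confirmed checkpoint $C$. Because $C$ binds the committed application, receipt, and evidence identifier, and because $C$ is recorded only after $\tau$ is chain-confirmed, the canonical debt authority already holds the applied flag for that settlement domain when any post-checkpoint worker starts. The first step is to record this fact. The applied status is canonical storage in \texttt{DebtManagerCore}, not service state. By the no-secondary-ledger property and the Financial-domain subset compositional authority uniqueness theorem, no worker-owned journal or projection can contradict or reset it.

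Next I would enumerate the actions available to the new worker under \cref{alg:settlement}.
\begin{enumerate}[label=(\roman*)]
\item A query of canonical application state. Reads are state-inert.
\item A resubmission of $e$ to \texttt{SettlementEvidenceRegistry}. This yields \textsf{replay} or \textsf{conflict}. Under receipt and application uniqueness, neither outcome authorizes a new application. By the Conflict safety corollary, a conflict cannot reduce debt or release value.
\item A call to \texttt{applyAdmittedSettlement}. The atomic guard reads the already-applied flag and produces no effect, or reverts. Either way, the earlier theorem gives zero additional debt reduction.
\item Projection rebuild. By Projection rebuildability and the Projection non-authority corollary, this creates no financial transaction.
\end{enumerate}
Cases (i)--(iv) exhaust the retry path, since retries are required to query canonical state and cannot invent a new key. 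Hence the effect count for the operation stays exactly one. Reverted or no-op submissions are excluded from the count by the definition of financial exactly-once recovery.

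The main obstacle is concurrency and aliasing. A stale worker, or a retry racing with the original, might route the same logical settlement through a different transport identifier, or through a different address or selector. For transport aliasing, I would appeal to the clause of receipt and application uniqueness stating that a changed transport identifier cannot authorize another application. For address or selector aliasing, I would use the Manifest-complete call surface assumption together with the No off-manifest equivalence proposition: a call to an off-manifest target is a different command, not a duplicate effect of this operation. Interleavings inside the chain are handled because the guard check, the applied-flag record, and the debt mutation form one atomic transition.

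Finally, I would remark on scope. Downstream releases (residual or collateral) are distinct operations with their own guards, so the theorem claims nothing about them beyond their own idempotence.
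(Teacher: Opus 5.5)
Your proposal is correct and is essentially the paper's argument: the checkpoint identifies an already-committed application, and every post-restart attempt either reads that result or meets the atomic same-key guard and reverts or no-ops, so new transaction identifiers add no financial delta and projection rebuild consumes no financial operation. Your case split and aliasing discussion expand this argument, drawing on the manifest-complete call surface and receipt-uniqueness assumptions that the theorem does not list; the exhaustiveness of (i)--(iv) is not actually needed, because the selector-level guard alone blocks a second application whatever the worker does.
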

\begin{proof}
The checkpoint identifies an already-committed application. Canonical lookup and the same-key guard either return that result or reject/no-op a repeated application. Such attempts can have their own transaction identifiers but cannot create another committed financial delta. Projection rebuilding does not consume another financial operation.
\end{proof}

The theorem is explicitly post-checkpoint. It does not, by itself, establish recovery in every earlier crash window. In particular, an admitted receipt is not an applied settlement, and an unavailable or unfinalized lookup is not reliable evidence of absence. The at-most-once safety argument is conditional on the canonical guard even when a service checkpoint is missing; no additional crash-window execution is reported here. Eventual application requires separate liveness assumptions. A local ledger also cannot make a non-atomic external-venue effect exactly-once merely by retrying a local command.

The retained FIA-010 fixture has the narrower observation of one canonical financial transaction, zero duplicate financial delta, old/new process identities, a durable checkpoint, and rebuilt projections. That historical observation is preserved, not elevated into a universal bound on transaction submissions. Settlement application (FIA-004) and eligible residual release (FIA-005) remain distinct valid operations.

The public verifier has a narrower scope than a runtime replay: the inspected companion checks selected fields and published-file integrity, while withheld runtime identities remain attested. A source inspection at the separately pinned September-review snapshot confirms this distinction \citep{axientfiaaudit2026}; it is not a rerun of the private protocol.

\section{Deterministic deployment manifest}
The graph is identified by more than source code. Addresses, bytecode, ABIs, storage layouts, role assignments, pause targets, selectors, adapter identity, API schema, and generated-client identity collectively determine the system under test.

\begin{definition}[Canonical deployment manifest]
A canonical deployment manifest $M$ is a canonicalized record containing at least:
\begin{enumerate}[label=(\roman*)]
\item chain and execution-mode identity;
\item workspace commit or tag;
\item every deployed canonical module and controller address;
\item creation and deployed bytecode hashes;
\item ABI and storage-layout hashes;
\item role assignments and selector policy;
\item pause target set;
\item emulator adapter identity;
\item OpenAPI, schema, and generated-client hashes;
\item a separately stored manifest digest, excluded from the manifest body that it hashes.
\end{enumerate}
\end{definition}

The manifest body excludes its own digest and downstream receipts or package digests. Canonicalization operates on that body, not a self-referential wrapper. The canonical serialization must be fixed, for example RFC 8785 JSON canonicalization \citep{rfc8785}; the retained release codec and test vectors, not mathematical tuple notation alone, determine historical bytes. No existing manifest or digest is recomputed or replaced by this clarification. The full stack fails closed if runtime addresses, bytecode, role, selector, schema, or adapter identity differs from $M$. This resembles supply-chain provenance systems in which a result is meaningful only under a sealed build and dependency identity \citep{torresarias2019intoto,slsa2023,nistssdf2022}.

\begin{proposition}[Manifest-bound command safety]
If every service command checks that its target address and selector are authorized by $M$, then a command directed to a stale deployment, unregistered controller, forbidden selector, or legacy bridge is rejected before financial execution.
\end{proposition}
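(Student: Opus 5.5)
The plan is to argue by cases on how a service command can fail to match the manifest. Each offending category in the statement will be reduced to a failure of the manifest guard $G_M$. Then the conjunctive execution predicate $G=G_MG_RG_PG_KG_LG_B$ shows that no financial transition is defined.

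First I fix what ``authorized by $M$'' means. By the definition of the canonical deployment manifest, $M$ records every deployed canonical module and controller address (item iii), the selector policy (item vi), and the chain and execution-mode identity (item i). Under the Manifest-complete call surface assumption, this record is the full set of admissible (address, selector) pairs. I take the check to compare the command's target address, selector, and manifest digest against $M$, and to run in the service before any transaction is submitted.

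The four cases then go as follows.
\begin{enumerate}
\item \emph{Stale deployment.} Its addresses or bytecode hashes differ from those in $M$, or its manifest digest differs from the sealed digest (item x). The address check therefore fails.
\item \emph{Unregistered controller.} By definition its address does not appear among the allowed storage-free controllers in $M$.
\item \emph{Forbidden selector.} It is absent from, or denied by, the selector policy.
\item \emph{Legacy bridge.} By the No off-manifest equivalence proposition, it is a different canonical command even if its label matches. Its address or selector is off-manifest, so the same failure applies.
\end{enumerate}
In every case $G_M(u,m)=0$. The transition $T_u$ is then undefined, and the command yields a typed rejection with zero financial mutation.

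The ordering claim, ``before financial execution'', comes from where the check sits. It is evaluated in the service dispatcher prior to submission, so no canonical selector runs and no EVM state changes. I would also note the defense in depth: even if a submission slipped through, the canonical modules' role checks under $G_R$ would still reject it. That second layer is not needed for the proposition as stated.

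The main obstacle is the stale-deployment case in which the address itself is unchanged, for example after an upgrade or a redeploy at the same address. A check on address and selector alone would miss this. I would handle it by relying on the fail-closed rule stated just before the proposition: the stack compares runtime bytecode, roles, schema, and adapter identity against $M$ before accepting commands. So ``authorized by $M$'' is read as including the digest and code-identity binding. I would state this reading explicitly as part of the hypothesis rather than claim it follows from the address check alone.
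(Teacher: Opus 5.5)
Your proposal is correct and matches the paper's argument, which is only implicit: the paper gives no separate proof and treats the claim as immediate from the manifest check together with the fail-closed rule stated just before it (the stack fails closed if runtime addresses, bytecode, role, selector, schema, or adapter identity differ from $M$); your reduction of each case to $G_M=0$, with that fail-closed rule invoked for a stale deployment at an unchanged address, makes this reasoning explicit. One tidy-up: in your first case, ``the address check therefore fails'' does not follow when only the bytecode hashes differ, so move the strengthened reading of ``authorized by $M$'' from your final paragraph into case 1 itself.
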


\section{Canonical state projection}
A final hash is useful only if its source state is well defined. Fixture expectations are permitted only as simulator outputs; they cannot be copied into an actual layer.

For scenario $s$ and terminal block $b$, define the canonical projection
\[
\Pi(s,b,M)=\bigl(M,\{V_m(b)\}_{m\in\mathcal M_c},\mathcal E(s,b)\bigr),
\]
where $V_m(b)$ is the set of authorized ABI views from canonical module $m$, and $\mathcal E(s,b)$ is the ordered event list from manifest addresses, sorted by
\[
(\textsf{block number},\textsf{transaction index},\textsf{log index}).
\]
The state hash is
\[
h_{\mathrm{contract}}=\Hh(\mathrm{JCS}(\Pi(s,b,M))).
\]

\begin{proposition}[Fixture exclusion]
If $\Pi$ is derived from authenticated, retained canonical observations at the registered boundary, and its actual-state inputs are not copied directly or indirectly from the simulator expectation, agreement of their declared semantic projections is an observed comparison. Excluding only the expected digest is insufficient: copying the expected state and rehashing it would preserve the same defect.
\end{proposition}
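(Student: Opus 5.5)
The plan is to establish Fixture exclusion by a provenance (data-dependence) argument over the inputs of $\Pi$, not through any property of $\Hh$ or JCS alone. First, I will state precisely what an \emph{observed comparison} means. The actual-layer digest $h_{\mathrm{contract}}=\Hh(\mathrm{JCS}(\Pi(s,b,M)))$ and the simulator digest $h_{\mathrm{sim}}$ are compared. The comparison counts as observed exactly when the actual-layer value is a function of canonical observations only, namely the ABI views $V_m(b)$ for $m\in\mathcal M_c$ and the ordered event list $\mathcal E(s,b)$ read from manifest addresses. It must not depend on the simulator expectation $\sigma_s$. I will model the actual-layer pipeline as a deterministic map $h_{\mathrm{contract}}=f(M,\{V_m(b)\},\mathcal E(s,b),\sigma_s)$. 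The hypothesis that inputs are ``not copied directly or indirectly'' then means that $f$ is constant in its last argument; I will take this as the transitive closure of data flow. Authentication and retention at the registered boundary, tied to $M$ through the manifest-bound command safety proposition and the fail-closed identity check, guarantee that the remaining arguments really are canonical chain state at block $b$ and not service projections. For this I will invoke Projection non-authority.

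Second, the positive direction. Because $f$ does not vary with $\sigma_s$, agreement $h_{\mathrm{contract}}=h_{\mathrm{sim}}$ (equivalently, equality of the declared semantic projections after canonicalization) is an equation between two independently derived quantities. If the simulator expectation were changed, the actual value would stay the same, so the equality carries information about canonical state and is not true by construction. This is what ``observed'' formalizes. Determinism of JCS and event sorting by (block, tx index, log index) make $\Pi$ well defined, so the comparison is reproducible.

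Third, the negative clause, shown by a short counterexample. Suppose a pipeline excludes only the field $h_{\mathrm{sim}}$ but sets the actual state $\Pi':=\sigma_s$ (or some function $g(\sigma_s)$) and rehashes it. Then $\Hh(\mathrm{JCS}(\Pi'))=h_{\mathrm{sim}}$ holds identically for every chain state. The comparison would therefore accept even when canonical state diverges, for example after a tampered debt view. So excluding the digest does not make $f$ independent of $\sigma_s$, and the defect is preserved.

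The main obstacle is making ``indirectly copied'' precise enough to be checkable. Partial or transformed copies, such as a fixture value used as a default when a view read fails, must count as dependence. I will handle this by defining dependence semantically, as non-constancy of $f$ in $\sigma_s$ over admissible runs. I will also require that missing observations fail closed, by analogy with the stale-state rule in the pause definition, instead of being filled in from the fixture.
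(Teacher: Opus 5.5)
The paper gives no separate argument for this proposition. It is essentially definitional, and its second sentence (copy the expected state, rehash, and the defect survives) serves as its own justification. Your two-part plan matches that, and your copy-and-rehash counterexample is right.

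The gap is where you place the independence condition. You treat $\{V_m(b)\}$ and $\mathcal E(s,b)$ as free arguments of $f$, and you read ``not copied directly or indirectly'' as constancy of $f$ in $\sigma_s$. That excludes copying inside the collector, but not copying \emph{upstream of the observations}. Consider a scenario driver that takes settlement or shortfall amounts from the simulator's outputs, or a harness that writes fixture values into manifest-bound storage on the local Anvil chain. Either produces views and events that are authentic, retained canonical chain state at block $b$. A collector that never touches $\sigma_s$ reads them, yet agreement then holds by construction. So under your model the step-two conclusion (``not true by construction'') does not follow. The ``transitive closure of data flow'' has to be taken over the whole run. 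Define dependence for the end-to-end map from $\sigma_s$ through command generation and chain execution to the actual projection, holding fixed only the scenario inputs the simulator legitimately shares.

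Relatedly, authenticity of the observations is a hypothesis of the proposition. Manifest-bound command safety concerns rejecting commands, i.e.\ writes. Projection non-authority says projections cannot change canonical state. Neither says anything about the provenance of collected reads; the relevant fact in the paper is that the collector accepts only manifest addresses.

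A secondary correction: the compared object should be the declared semantic projection, $\mathrm{normalize}(x^{\mathrm{sim}}_s)$ against $\mathrm{normalize}(x^{\mathrm{contract}}_s)$, not $h_{\mathrm{contract}}=h_{\mathrm{sim}}$. There are three reasons. First, $\Pi$ also carries $M$ and event-ordering coordinates, which are deployment and runtime identity rather than semantic content. Second, the paper says explicitly that agreement is not tested on raw artifact digests. Third, your ``equivalently'' would in any case need collision resistance of $\Hh$. Your argument transfers unchanged once stated for the normalized projection.

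Finally, with a transformed copy $g(\sigma_s)$ the equality need not hold identically. The defect to state is that the verdict becomes a function of $\sigma_s$ alone, hence constant across chain states. That is why it cannot respond to a tampered debt view.
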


The remaining layers---journal, service, API, browser, and audit reconstruction---must retain their actual inputs and root ancestry. Agreement is tested on the registered common semantic projection, not on equality of heterogeneous raw artifact digests. Derivation remains necessary even when normalized values agree. Independent runtime participation would require an additional claim-specific corroboration gate; that stronger gate was not evaluated for the registered local cohort and is \texttt{NOT\_EVALUATED}, not a failure of its accepted local profile.

\section{FIA evaluation surface}
The twelve Financial Interaction Assertions form the intended end-to-end surface. \Cref{tab:fia} states the canonical questions, not final results.

\begin{longtable}{@{}P{0.09\textwidth}P{0.27\textwidth}P{0.56\textwidth}@{}}
\caption{FIA scenarios and canonical proof obligations.}\label{tab:fia}\\
\toprule
FIA & Scenario & Canonical proof obligation \\
\midrule
\endfirsthead
\toprule
FIA & Scenario & Canonical proof obligation \\
\midrule
\endhead
01 & successful admission & risk approval, atomic position/debt origination, manifest-bound receipt, projection chain \\
02 & exact $2\times$ admission & discrete leverage policy, exact calldata, same atomic graph \\
03 & risk rejection & typed rejection and zero financial mutation \\
04 & partial then final settlement & admitted partial and final evidence, bounded debt application, lifecycle closure \\
05 & residual and collateral release & release only after FIA-04 establishes debt-clearance eligibility \\
06 & liquidation to hard-flat & authorized liquidation request/execution and hard-flat transition \\
07 & liquidation and waterfall & request/execution, reserve--LBP--Senior ordering, conservation \\
08 & evidence conflict & canonical exception/quarantine and zero debt mutation \\
09 & withdrawal queue & real pool/vault queue and loss participation until service/burn \\
10 & exactly-once recovery & chain checkpoint, process restart, one financial transaction, rebuild \\
11A & stale capability before admission & VenueRegistry/RiskManager typed rejection, causal freshness check, and zero mutation \\
11B & book-epoch change with $D_t>0$ & epoch transition, risk-increase revocation, fail-closed/risk-reducing path, and retained causal evidence \\
12 & pause and recovery & frozen targets, guardian fan-out, pause-policy semantics, timelocked recovery/unpause \\
\bottomrule
\end{longtable}

Each FIA is complete only when the role-specific UI or typed test client issues a command, the API and durable saga bind the command to canonical selectors and receipts, canonical state is observed, and the journal, service, API, browser, and audit layers retain registered derivations. The parent denominator remains twelve. FIA-011 is conjunctive: it cannot pass unless both FIA-011A and FIA-011B have retained raw and normalized evidence, causal checks, negative controls, replay, and explicit report rows.

\section{Implementation mapping and verification obligations}
\subsection{FIA-to-module touch matrix}
\Cref{tab:fia-modules} states the minimum module participation required for each scenario. A final evidence package must show actual manifest-authorized calls and events, not only a test fixture that describes them.

\begin{longtable}{@{}P{0.08\textwidth}P{0.28\textwidth}P{0.29\textwidth}P{0.27\textwidth}@{}}
\caption{Minimum canonical module coverage by FIA.}\label{tab:fia-modules}\\
\toprule
FIA & Required authorities & Financial effect & Evidence emphasis \\
\midrule
\endfirsthead
\toprule
FIA & Required authorities & Financial effect & Evidence emphasis \\
\midrule
\endhead
01 & Risk, Position, Debt, Evidence & admitted $1\times$ position and debt within the $2\times$ ceiling & atomicity and correlation \\
02 & same as FIA-01 & exact discrete $2\times$ policy & exact command and calldata identity \\
03 & Risk and capability inputs & no mutation & typed rejection and before/after equality \\
04 & Evidence, Debt, Position & partial then final admitted settlement and debt/lifecycle closure & bounded application and distinct evidence identities \\
05 & Debt, Position & one-time trader residual and collateral release after debt-clearance eligibility & ordered release preconditions \\
06 & Liquidation, Position, Evidence, Debt & authorized liquidation to hard-flat & request/execution identity and close-horizon transition \\
07 & Liquidation, Reserve, LBP, Credit & ordered canonical loss allocation & conservation and allocation hash \\
08 & Evidence, Debt & exception/quarantine, zero debt delta & conflict/replay domain \\
09 & CreditPool & Senior withdrawal request remains loss-participating until service/burn & queue ordering and NAV; LBP queue is successor evidence \\
10 & Evidence, Debt, service persistence & one transaction, rebuilt projections & checkpoint and process identities \\
11A & VenueRegistry, RiskManager & no mutation & stale capability, typed cause, and freshness provenance \\
11B & VenueRegistry, RiskManager, Position, Debt & no risk increase; fail-closed/risk-reducing transition & book epoch, positive-debt causal chain, and retained transition evidence \\
12 & all frozen pause targets, Timelock & pause and governed recovery & atomic fan-out, pause-policy mode, and delayed unpause \\
\bottomrule
\end{longtable}

\subsection{Verification layers}
For every scenario, the registered local programme retains seven evidence layers with registered derivations; claims of independent runtime participation would additionally require the proposed multi-root corroboration gate, which was not evaluated for this cohort:
\begin{enumerate}
\item simulator expectation;
\item canonical contract state from ABI views and ordered canonical events;
\item append-only journal;
\item service materialization;
\item authenticated API response;
\item browser projection after a real role control or typed browser client action;
\item Audit Explorer reconstruction.
\end{enumerate}
The expected simulator hash is not an input to layers 2--7. The contract state collector accepts only addresses in the deployment manifest. Journal ordering references block, transaction, and log coordinates rather than receipt-file order.

\begin{definition}[Canonical E2E completion]
A scenario is canonically complete only if its typed command enters through the approved role boundary, every required financial effect occurs in the canonical graph, all seven layers are retained and correlated, mandatory negative controls pass, and no forbidden authority participates.
\end{definition}

\subsection{Differential verification}
The pure allocation library, reference waterfall, exact simulator, and canonical contract transition can be compared without granting the reference model financial authority. The comparison is
\[
\mathrm{normalize}(x^{\mathrm{sim}}_{s})=\mathrm{normalize}(x^{\mathrm{contract}}_{s}),
\]
for registered scenario $s$. Disagreement may reveal a contract defect, reference-model defect, or underspecified normalization. It is reported rather than averaged away.

\subsection{Failure classification}
The final report should distinguish:
\begin{itemize}
\item \texttt{FULL\_E2E}: complete canonical graph and evidence;
\item \texttt{READ\_ONLY\_E2E}: full read/reconstruction without required financial command;
\item \texttt{CANONICAL\_REPLAY\_ONLY}: deterministic contract replay without full application path;
\item \texttt{BOUNDARY\_BLOCKED}: a declared environment, data-rights, or authority boundary prevents completion;
\item \texttt{FAILED}: the intended transition or evidence property is contradicted.
\end{itemize}
These labels describe distinct profiles, not a requirement to satisfy every conceivable production gate. The accepted local result is a pass within its registered gate set. A missing required gate is incomplete inside that scope; a broader untested claim is \texttt{NOT\_EVALUATED}. An expected typed rejection can be a local pass without a committed financial transaction. No label in an immutable source schema is rewritten by this explanatory distinction. A scenario cannot be promoted by renaming a harness or copying projection hashes.

\section{Proof obligations for the final implementation}
The final source and evidence review should discharge the following obligations.

\paragraph{O1: controller storage exclusion.}
Storage layout and source review show that new controllers contain immutable module addresses and non-authoritative metadata only.

\paragraph{O2: manifest completeness.}
Every canonical module, controller, role, selector, adapter, pause target, ABI, bytecode, and schema identity used in a scenario is present in the sealed manifest.

\paragraph{O3: authority reachability.}
Every financial mutation in the scenario occurs in the declared authority module; no legacy bridge or test evidence manager participates.

\paragraph{O4: waterfall conservation.}
Before/after vault values and events satisfy the registered allocation and conservation equation.

\paragraph{O5: queue persistence.}
Queued shares remain in supply and participate in impairment until canonical service and burn.

\paragraph{O6: replay safety.}
Duplicate and conflicting evidence create no unauthorized financial delta.

\paragraph{O7: recovery exactly-once.}
A confirmed application checkpoint survives worker termination; a fresh worker reconstructs state without a duplicate committed effect. The one-transaction count remains the narrower retained FIA-010 fixture result.

\paragraph{O8: pause coverage.}
The target set is complete, fan-out atomic, risk-reducing selectors remain available where specified, and unpause requires the timelock.

\paragraph{O9: derivation topology and corroboration.}
Each evidence layer retains its inputs, derivation, and root ancestry; a fixture expectation is never copied into an actual layer. A claim of independent runtime participation requires a separately specified multi-root corroboration predicate rather than pairwise relabeling; that proposed predicate was not evaluated for this cohort.

\paragraph{O10: reproducibility.}
Two clean complete packages and the Phase 1--17 replay meet their declared raw/normalized identity policies and survive independent review.

\paragraph{O11: capability continuity.}
FIA-011A and FIA-011B each retain the exact capability/book-epoch transition, causal checks, negative controls, replay result, and explicit report coverage; the parent FIA-011 remains non-passing unless both subcases pass.

\paragraph{O12: legal authority binding.}
Every risk-increasing command binds the canonical module to an eligible entity role, custody class, and permitted-use class; the additive \texttt{LEGAL\_NO\_GO} case proves zero mutation when displayed economic capacity is legally ineligible.

\appendix
\section{Additional proof details}
\subsection{Composition of storage-free controllers}
Let $C_1$ and $C_2$ be storage-free controllers over disjoint or overlapping canonical domains. Their sequential composition $C_2\circ C_1$ does not itself create a second ledger because neither controller stores authoritative financial state. However, correctness still requires that the composed selector order satisfy the protocol invariants. Storage freedom is necessary for authority uniqueness but not sufficient for semantic correctness.

\begin{remark}[Why an omnibus controller is discouraged]
A large controller can remain storage-free and still become operationally opaque, hard to pause, and difficult to audit. The architecture therefore prefers narrow controllers aligned with one atomicity boundary. The rule is not only ``no storage'' but also ``one reviewed reason for atomic composition.''
\end{remark}

\subsection{Saga safety versus atomic safety}
Atomicity prevents partial state within one transaction. A saga permits intermediate canonical states. Its safety therefore depends on registered intermediate states, idempotency, compensation or recovery paths, and a durable progress record. In the settlement saga, an admitted evidence receipt and a chain-confirmed debt application may exist before the browser updates. That is acceptable because financial authority is already canonical and the projection is rebuildable. It would not be acceptable for the browser or service database to release collateral before canonical debt clearance.

\subsection{Exactly-once financial effect}
The exactly-once property can be restated as a linearizability condition around the chain-confirmed transaction: the financial command linearizes at the committed canonical transaction, while later retries linearize as reads/no-ops of the same completed command. The service may process the message more than once; the financial effect occurs once.

\subsection{Explicit zero layers}
Representing an absent position buffer or global reserve as an explicit zero layer prevents two ambiguities: whether the layer was intentionally disabled or accidentally omitted, and whether a later configuration can insert the layer without changing allocation semantics. The manifest therefore records both presence and configured capacity.

\section{Registered local evidence result}
\interimnote

The evidence companion documents the registered local cohort \citep{axientfia2026}. Its retained aggregate records support the counts and derivation classes evaluated here; confidential operational identifiers remain outside the distributed evidence boundary.

The completed evidence milestones and their retained identities are summarized in \cref{tab:final-evidence}.

\begin{table}[t]
\centering\small
\caption{Registered local evidence milestones.}
\label{tab:final-evidence}
\begin{tabularx}{\textwidth}{@{}P{0.29\textwidth}P{0.19\textwidth}Y@{}}
\toprule
Milestone & Observed result & Evidence description \\
\midrule
Source-owned FIA evidence & five non-omnibus adapter groups; registered local evidence reports completed coverage & publication-safe cohort record \\
All-12 cohort & 12 records, 7 retained layers each, 84 positions & registered local cohort \\
Clean materialization & two fresh trees; one byte-identical file index & registered materialization comparison \\
Strict archive & separate verifier accepted the registered archive & registered local package \\
Ordered replay & selected replay is bound to the registered archive & registered local replay \\
\bottomrule
\end{tabularx}
\end{table}

The all-12 ledger contains one uniquely correlated record for each FIA and seven retained
\texttt{ACTUAL\_EVIDENCE} positions per record. Two clean materializations share one byte-identical canonical file index, and a separate verifier accepted the registered deterministic archive. A selected ordered replay is content-bound to that archive. These findings are reported at the registered-local level; confidential materialization, package, and replay identifiers are outside the distributed evidence boundary.

FIA-011 remains one parent assertion in the denominator of twelve and retains both mandatory subcases: stale capability before admission and book-epoch change after position opening while debt remains positive. The retained A/B envelope and focused adapter tests prevent a generic FIA-03 rejection or a one-receipt placeholder from satisfying the parent.

\subsection{Claim typing}
The observed result is the completed cohort-bound evidence chain described above. Its interpretation is that the canonical protocol graph can preserve distinct financial assertions through common manifest, materialization, packaging, and replay identities. Public-network operation, external-venue execution, real custody, terminal oracle settlement, and production governance are successor evidence programmes rather than conclusions inferred from this cohort.

\section{Security analysis}
\subsection{Threat model}
The relevant adversary may:
\begin{itemize}
\item direct a service command to a stale or unauthorized address;
\item substitute a selector with the same user-facing meaning but different state effects;
\item replay or conflict settlement evidence;
\item mutate one layer of the evidence package while preserving others;
\item restart workers between chain confirmation and projection update;
\item attempt partial pause, premature unpause, or risk increase during recovery;
\item exploit a second ledger in a controller, service, or test bridge;
\item count the same reserve or LBP capital in multiple certificates or withdrawal headroom.
\end{itemize}

The architecture addresses these risks through unique authority, manifest binding, atomic controllers, receipt idempotency, canonical queues, explicit capital eligibility, and fail-closed evidence promotion. It does not solve oracle truth, venue insolvency, stablecoin failure, key compromise, legal priority, or undiscovered contract vulnerabilities.

\subsection{Upgradeability and code identity}
The public evidence companion records the tested source and manifest identities but does not convert that snapshot into a general immutability claim. A production release must still identify immutable, proxy-based, and replaceable modules and specify the authorized manifest transition under upgrades.

\subsection{Economic non-duplication}
The current stateful integrated waterfall remains a reference oracle rather than canonical capital storage. Differential tests may compare the pure allocation result with the reference model, but the production-path controller must mutate only the real reserve, LBP, and Senior modules. This distinction is essential: mathematically equal outputs do not justify two authoritative balance sheets.

\section{Relation to prior work}
The no-secondary-ledger principle specializes transaction-processing and linearizability ideas to modular smart-contract finance \citep{herlihy1990linearizability,gray1993transaction}. Ethereum provides atomic transaction semantics but does not automatically provide a correct multi-contract architecture \citep{wood2014yellowpaper}. DeFi lending systems demonstrate on-chain pool shares, utilization, liquidation, and reserve factors \citep{compound2019,aave2022,werner2022defi}; event-margin protocols add time-bounded execution, external settlement evidence, and debt-free finality. Financial-network models motivate explicit loss ordering and isolation \citep{eisenberg2001systemic}. The present contribution is the combination of single-domain authority, storage-free composition, durable evidence sagas, manifest-bound deployment, and independently reconstructible state across a leveraged event-market graph.

\section{Reproducibility and evidence availability}
The public evidence companion contains sanitized FIA definitions, schemas, checksums, provenance records, and an offline verifier for published structure and metadata. Raw runtime payloads and operating-system paths are excluded. The verifier recomputes repository-file checksums and checks selected required fields, counts, ordering, canonical identifiers, and internal consistency; payload-derived identities remain attested because their raw roots are not public.

\section{Limitations}
The paper establishes a formal composition architecture and an observed cohort-bound evidence result. It does not estimate production failure rates, external venue compatibility, real-asset custody, human usability, regulatory compliance, or economic profitability. Those are separate empirical and operational questions.

\section{Conclusion}
A leveraged event-market protocol is not a collection of contracts; it is a single financial state machine whose authority must remain unambiguous across modules, services, and recovery. The architecture developed here assigns one storage owner to every financial domain, confines composition logic to storage-free controllers, uses atomic transitions only where partial completion is unsafe, uses durable idempotent sagas where evidence arrives asynchronously, and binds the complete system to a deterministic manifest. The resulting design makes it possible to state exact proof obligations for origination, settlement, liquidation, queues, pause, and recovery. The registered local evidence chain shows how the twelve FIA assertions, clean materialization, strict packaging, and ordered replay can be bound without turning controllers or services into a second financial ledger.

\section*{Funding}
No external funding was received for this research.

\section*{Competing Interests}
A related pending UAE patent application is disclosed in the Patent Disclosure section.

\section*{Data and Code Availability}
The article reports a registered local engineering-evidence cohort. The retained public evidence companion supports recomputation of its published structure and metadata; it is not an external dataset deposit, a production record, or a live-venue data source.

\noindent\textbf{JEL:} C63, G13, G14, G23.

\section*{AI Disclosure}
OpenAI ChatGPT and Codex were used for editorial and technical assistance during manuscript preparation. The author made all substantive research decisions, reviewed the final manuscript, and assumes full responsibility for its contents.

\section*{Patent Disclosure}
Certain technical aspects described in this article are included in a UAE patent application filed on 30 September 2026 (Application No. MOE-Patents-PTA-0003399-20260930). The application is pending; no granted patent is claimed.

\FloatBarrier
\begingroup
\raggedright
\bibliographystyle{plainnat}
\setlength{\bibsep}{4pt plus 0.3ex}
\bibliography{references}
\endgroup
\end{document}